\documentclass[11pt]{article}

\usepackage[T1]{fontenc}
\usepackage{lmodern}
\usepackage[margin=1in]{geometry}
\usepackage{microtype}
\usepackage{amsmath,amssymb,amsthm,mathtools}
\usepackage{enumitem}
\usepackage{xcolor}
\usepackage[colorlinks=true,allcolors=blue!55!black]{hyperref}

\usepackage{amsmath}
\usepackage{amsfonts}
\usepackage{pifont}
 \usepackage{amsthm}
\usepackage{amssymb}
\usepackage{mathtools}
\usepackage{xcolor}

\usepackage[
  backend=biber,
  style=alphabetic,
  sorting=ynt,
  sortcites=true,
  doi=false,
  url=false,
  eprint=false
]{biblatex}
\usepackage{graphicx}
\usepackage{comment}
\usepackage{float}
\definecolor{linkblue}{HTML}{001487}
\usepackage[nameinlink,capitalize,noabbrev]{cleveref}
\usepackage{dsfont}
\usepackage{enumitem} %better lists
\usepackage{algorithm}
\usepackage{algpseudocode}
\usepackage{longfbox}
\usepackage{csquotes}
\usepackage{authblk}
\usepackage{adjustbox}
\usepackage{longfbox}
\usepackage{listings}
\usepackage{mdframed}
\usepackage[many]{tcolorbox}
\usepackage[colorinlistoftodos]{todonotes}
\newtcbtheorem[number within=section]{mytheo}{Algorithm}%
  {colback=white,colframe=teal!50,fonttitle=\bfseries,
   enhanced,
   coltitle=red!00!black,
   attach boxed title to top left=
     {xshift=2ex,yshift=-2mm,yshifttext=-1mm},
   boxed title style={colframe=teal!50,
     colback=teal!50}}{protocol}
\usepackage{lipsum}

\usepackage[
        lambda,
	operators,
	advantage,
	sets,
	adversary,
	landau,
	probability,
	notions,	
	logic,
	ff,
	mm,
	primitives,
	events,
	complexity,
	asymptotics,
	keys]{cryptocode}

\newtheorem{theorem}{Theorem}[section]

\newtheorem{definition}[theorem]{Definition}
\newtheorem{lemma}[theorem]{Lemma}

\newtheorem{claim}[theorem]{Claim}
\newtheorem*{claim*}{Claim}
\Crefname{claim}{Claim}{Claims}

\crefname{lemma}{Lemma}{Lemmas}
\Crefname{lemma}{Lemma}{Lemmas}

 \usepackage{fullpage}
\algblockdefx{Upon}{EndUpon}[1]{\textbf{upon} #1}{}
\algnotext{EndUpon}
\algrenewcommand{\algorithmiccomment}[1]{\hfill$\triangleright$ \textcolor{blue}{#1}}

\newcommand{\val}{\mathsf{val}}

\def\E{{\mathbb{E}}}

\def\crBB{\mathsf{CRBB}}

\def\BA{\mathsf{BA}}

\def\val{\mathsf{val}}

\def\fpCRBB{\ensuremath{\mathsf{\Pi_{\crBB,4}}}}

\def\spCRBB{\ensuremath{\mathsf{\Pi_{\crBB,5}}}}

\def\spsCRBB{\ensuremath{\mathsf{\Pi_{\crBB,4,\Sync}}}}

\def\Sync{\mathsf{Sync}}

\def\part{\mathsf{part}}

\newcommand{\cA}{{\mathcal A}}

\usepackage{bbm}

\DeclareSymbolFont{bbold}{U}{bbold}{m}{n}
\DeclareSymbolFontAlphabet{\mathbbold}{bbold}

\usepackage{bm}

\DeclareMathAlphabet{\boondox}{U}{BOONDOX-ds}{m}{n}

\theoremstyle{definition}
\theoremstyle{remark}
\newtheorem{remark}[theorem]{Remark}

\newcommand{\Prb}{\mathbb{P}}
\newcommand{\Var}{\operatorname{Var}}
\newcommand{\Supp}{\operatorname{Supp}}

\newcommand{\PiHat}{\widehat{\Pi}}
\bibliography{Refs}

\title{Settling The Round Complexity of Byzantine Agreement Against a Full-Information, Adaptive Adversary}
\author[1]{Yuval Efron}
\affil[1]{IAS}
\date{\today}

\begin{document}
\maketitle

\begin{abstract}
We prove that every randomized synchronous Byzantine Agreement protocol in the
full-information, strongly adaptive adversary model, secure against $t$ corrupt
parties, has worst-case expected round complexity
\[
  \Omega\!\left(\frac{t^2}{n\log(n+1)}\right).
\]
This improves upon the seminal
$\Omega(\frac{t}{\sqrt{n\log n}})$ bound of [Bar-Joseph, Ben-Or 98].
Our result matches the recent upper bound of
$O\left(\min\left\{\frac{t^2\log n}{n},\frac{t}{\log n}\right\}\right)$ of
[Dufoulon, Pandurangan 25], up to a $\log^2 n$ factor in the $t\ll n$
regime. Our proof takes inspiration from the recent works of
[Etesami, Mahloujifar, Mahmoody 20] and [Haitner, Karidi-Heller 26].
Specifically, we prove a multi-round concentration lemma showing that any
transcript event of probability $p$ can be forced with probability one by
corrupting $O(\sqrt{n\log(\frac1p)})$ parties in expectation. From there,
tools from [Chor, Merritt, Shmoys 89] allow us to lower-bound the probability
of the protocol not concluding in $R$ rounds by $\frac{1}{n^{O(R)}}$, using a
crash schedule involving at most $R$ parties. The combination of these
techniques yields the desired bound.

%The proof has two main ingredients. First, a label-aware multiplicative
%tampering lemma forces any transcript event of probability $p$ using an
%expected $O(\sqrt{n\log(1/p)})$ distinct corruptions, independently of the
%number of turns. Second, after fixing the parties' private randomness, a
%crash-schedule valency argument on the standard chain of $n+1$ binary input
%vectors produces a family of only $n^{O(R)}$ candidate slow events, one of
%which has probability at least $\exp(-O(R\log n))$. Combining the two
%ingredients at $R=\Theta(t^2/(n\log n))$ gives a hard-budget Byzantine attack
%with constant probability of delaying a forever-correct party for
%$\Omega(R)$ rounds.
\end{abstract}

\section{Introduction}\label{sec:intro}

Byzantine agreement $(\BA)$, the problem in which $n$ parties, of which at
most $t$ are corrupt and controlled by an adversary, must reach agreement on a
value, is a cornerstone problem of distributed computing. Since its inception
in the seminal work of \cite{LSP82}, it has been studied intensively over the
last 45 years, along many axes
\cite{FLP85,KatzKoo06,DolevReischuk85,AbrahamCDNP0S19,ChanPass23,
BermanGaray93,KingSaia11,MomoseRen21,CachinKursaweShoup05,GKKO07,
BoyleCG24,Nakamoto08,Rabin83}. In this work, we focus on synchronous
point-to-point networks, in which a message sent in round $r$ arrives at its
recipient by round $r+1$. Of the many efficiency metrics that have been
studied for the $\BA$ task, we focus on \emph{round complexity}, which
measures the number of rounds a protocol takes to converge to a decision: in
the worst case for a deterministic protocol, or in expectation (over the
protocol randomness) for a randomized protocol.

\paragraph{Round complexity.}
For deterministic protocols, the landscape of round complexity of $\BA$
protocols has been well studied and is by now well understood. Specifically,
\cite{DS83} proved that any deterministic $\BA$ protocol secure against $t$
faults must have round complexity of at least $t+1$. A plethora of protocols
in the literature achieve this bound, some concretely and some
asymptotically, trading off various other efficiency metrics and settings
\cite{DolevFFLS82,KowalskiM13,MomoseRen21,DS83}. In this work, we focus on
randomized protocols, for which the round-complexity landscape, while studied,
is far from understood. Before proceeding, we discuss an important facet of
the problem whenever randomness is brought up: the \emph{adaptivity} and
\emph{knowledge} of the adversary.

\paragraph{Adaptivity.}
In the context of randomized protocols, a gross partition separates
adversaries into two categories: \emph{static} and \emph{adaptive}. A static
adversary chooses the set of corrupt parties \emph{before} the beginning of
the protocol, i.e., before viewing any random outcomes throughout the
protocol. An adaptive adversary, on the other hand, may corrupt parties during
the protocol execution, based on the parties' internal private states, coin
flips, and other information available to it. We note that the distinction
between static and adaptive adversaries is irrelevant for deterministic
protocols.

\paragraph{Knowledge \& computational power.}
In the \emph{private channels} setting, it is assumed that parties, along with
having private states, communicate over secure channels, the content of which
the adversary cannot see. In the \emph{full-information} setting, on the other
hand, the adversary has a complete view of every communication channel and the
internal states of all parties at any point in time. Importantly, however, the
full-information adversary cannot know the \emph{future} states of parties
(e.g., their coin-flip outcomes in future rounds); it has access only to
information up to the current round. We say that the adversary is
\emph{rushing} if, in round $r$, the adversary can also view the randomness of
all parties \emph{in round $r$} and, based on this information, can corrupt
parties before round-$r$ messages are delivered. Moreover, the adversary can
either be computationally bounded, in which case cryptographic primitives can
be employed, or computationally unbounded.

In this paper, we consider the \emph{full-information}, \emph{rushing},
\emph{adaptive}, \emph{computationally unbounded} adversary. Interest in the
full-information setting has recently been rekindled in several works
\cite{AbrahamDEK26,CollinsEfronKomatovic25,YuLW24,abs-2606-14404,AbrahamS20},
due to the looming threats posed to modern cryptographic tools by quantum
computers and side-channel attacks. We note that in this model, the $\BA$ task
is feasible iff $t<\frac{n}{3}$, irrespective of the synchrony assumption of
the network \cite{LSP82,FLM86}.

\paragraph{Round complexity of randomized protocols.}
The round complexity of $\BA$ protocols is well understood in the context of
a static adversary. A long line of work (see
\cite{GoldwasserPavlovVaikuntanathan06,Ben-OrPV06} and references therein) has
explored the problem, culminating in the work of Goldwasser, Pavlov, and
Vaikuntanathan \cite{GoldwasserPavlovVaikuntanathan06}, which designs a $\BA$
protocol tolerating $t<\frac{n}{3+\epsilon}$ faults with
$O(\frac{\log n}{\epsilon^2})$ rounds in expectation against a
full-information, rushing, computationally unbounded adversary. A lower bound
is also known from the influential work of Chor, Merritt, and Shmoys
\cite{ChorMerrittShmoys89}, which shows that any $\BA$ protocol terminating in
$r$ rounds must have failure probability at least
$\Omega((\frac{rn}{t})^{-r})$, even against a crash-fault adversary. The round
complexity of $\BA$ is also well understood when the adversary is adaptive but
channels are private, or when the adversary is computationally bounded and has
no access to internal states. Vast amounts of work have been carried out in
these settings, and by now many works in these settings boast $\BA$ protocols
with $O(1)$ or $\widetilde{O}(1)$ expected round complexity
\cite{KatzKoo06,AbrahamCDNP0S19,KingSaia11}.

\paragraph{Previous work.} As mentioned, we consider a full-information, adaptive, rushing adversary.
In our considered setting, the pioneering work of Chor and Coan
\cite{ChorCoan85} designed a $\BA$ protocol with expected round complexity
$O(\frac{t}{\log n})$\footnote{The original paper designs a Monte Carlo
protocol that solves $\BA$ w.h.p. The recent work of
\cite{DufoulonPandurangan25} shows that it can be easily modified to a Las
Vegas version.}, beating the $\Omega(t)$ lower bound for deterministic
protocols. Later, the seminal work of Bar-Joseph and Ben-Or
\cite{BarJosephBenOr98} established a lower bound
$\Omega(\frac{t}{\sqrt{n\log n}})$ on the expected round complexity of $\BA$
protocols in our setting, even against crash failures. They also showed their
bound to be tight in the crash-failure setting by exhibiting a corresponding
protocol. Progress on the problem then stalled until a recent breakthrough
last year by Dufoulon and Pandurangan \cite{DufoulonPandurangan25}, who
designed a protocol with expected round complexity
$O(\frac{t^2\log n}{n})$. Combined with the result of Chor and Coan, the state
of the art then has an
$O(\min\set{\frac{t^2\log n}{n},\frac{t}{\log n}})$ protocol for the problem.

\paragraph{Main Result.}
In this work, we nearly settle the round complexity of the $\BA$ task in the
full-information, rushing, adaptive-adversary model by exhibiting an adversary
that establishes an
\[
  \Omega\!\left(\frac{t^2}{n\log(n+1)}\right)
\]
lower bound on the expected round complexity of any $\BA$ protocol that
tolerates $t$ corruptions.

\paragraph{Additional related work.}
A full review of the wealth of literature on $\BA$ is beyond the scope of this
work. Directly relevant lines of work include work on full-information
adaptive adversaries in \emph{asynchrony} and work on \emph{late}
adversaries. The early works of Ben-Or \cite{BenOr83Async} and Bracha
\cite{Bracha87Async} presented asynchronous $\BA$ protocols in the
full-information, adaptive setting, albeit with exponential round complexity.
The first such protocol with polynomial expected round complexity was given by
King and Saia \cite{KingSaia16Async,KingSaia18Correction}. Their resilience,
however, is far from optimal. More recent work by Huang, Pettie, and Zhu
\cite{HuangPettieZhu22,HuangPettieZhu24} improved on the state of the art by
an impressive margin, obtaining protocols with near-optimal resilience and
$O(n^4)$ expected round complexity. The optimality question, however, remains
far from resolved, with the state-of-the-art lower bound being $\Omega(n)$ by
the work of Attiya and Censor-Hillel \cite{AttiyaC08}, building on
\cite{Aspnes98}.

Another angle that has been considered in the literature is that of a late
adversary. A late adversary, in round $r$, has a complete view of the protocol
transcript and local coin flips up to some \emph{previous} round $r-d$. This
cleanly interpolates between static and adaptive adversaries. The work of
\cite{RobinsonScheidelerSetzer18} explored the \emph{almost-everywhere} $\BA$ task in this setting.

\section{Preliminaries}\label{sec:prelim}

We consider a setting with $n$ parties connected by synchronous authenticated
point-to-point channels. Furthermore, parties may utilize randomness by
\emph{locally} tossing coins. In particular, we assume no setup for a common
coin or any other setup. In other words, we consider the plain model.

\paragraph{The adversary.}
We consider a \emph{full-information}, \emph{strongly adaptive},
\emph{rushing} adversary. Specifically, for every $r$, at the beginning of
round $r$, the adversary sees the internal states, local coin flips, and
protocol transcript up to round $r$. It then also sees the local coin tosses
and resulting internal states of all parties in round $r$, after which it can
choose parties to corrupt in round $r$, prior to any round-$r$ message being
delivered. A party, once corrupted, stays corrupted for the entire execution.
A corrupt party is completely controlled by the adversary in all of its
actions. We say that the adversary is $t$-bounded in a given execution if
$|C_r|\leq t$ for all rounds $r$ of the execution, where $C_r$ denotes the set
of corrupt parties at the end of round $r$. In a given execution, a party $p$
is \emph{correct in round $r$} if $p\notin C_r$. A party is \emph{forever
correct} in a given execution if $p\notin C_r$ for all rounds $r$. All protocol
randomness consists of mutually independent local coin flips that are private
from the other correct parties.

\paragraph{Byzantine Agreement (BA).}
In the BA task, each party $i\in[n]$ has an input $x_i\in\{0,1\}$.
Given a protocol $\Pi$ and an adversary $\mathcal A$, we consider the following
properties.

\begin{enumerate}
 \item \textbf{Agreement.}
 For every input vector and every realization of the parties' and adversary's
 randomness, no two forever-correct parties decide different values.

 \item \textbf{Validity.}
 For every input vector and every realization of the parties' and adversary's
 randomness, if all forever-correct parties have input $b$, then every
 forever-correct party that decides outputs $b$.

 \item \textbf{Termination.}
 For an input vector $x$, let
 \[
  T_{\Pi,\mathcal A,x}
  :=
  \inf\left\{
   r:\text{ every forever-correct party has decided and halted by the end of round }r
  \right\},
 \]
 with $\inf\varnothing=\infty$. Protocol $\Pi$ satisfies termination against
 $\mathcal A$ if, for every input vector $x$,
 \[
  \Prb[T_{\Pi,\mathcal A,x}<\infty]=1,
 \]
 where the probability is over the parties' private randomness and any
 randomness used by $\mathcal A$.
\end{enumerate}

We say that $\Pi$ is secure against $\mathcal A$ if it satisfies agreement,
validity, and termination against $\mathcal A$. We say that $\Pi$ is
$t$-secure if it is secure against every $t$-bounded adversary.

%In particular,
%Randomness jointly generated through protocol messages is allowed; an
%exogenous trusted common coin is not part of the model.

\subsection*{Synchronous rounds vs.\ turns}

%For the analysis, fix once and for all an arbitrary public total order
%$\prec_{\mathrm s}$ on the party identities, independently of the protocol
%and transcript.
A \emph{synchronous communication round} $r$ has the following semantics.
\begin{enumerate}[label=(\roman*),leftmargin=2.4em]
 \item Each party $i$ that is still correct first processes all messages
       from round $r-1$. Then, depending only on its local state and
       round-$r$ local coin flips, it prepares its outgoing message vector
       \[
         M_{i,r}=(M_{i,r\to1},\ldots,M_{i,r\to n}).
       \]
       Here, $M_{i,r\to j}$ denotes the message that party $i$ sends to party
       $j$ in round $r$. Any explicit public action, such as
       $\mathsf{OUT}(v)$ or $\mathsf{DECIDE}(v)$, is included as a tagged
       component of this message vector. All honest round-$r$ message vectors
       are prepared before any round-$r$ message is delivered or processed.

 \item The full-information rushing adversary observes the prepared message
       vectors. It may also observe the parties' current local states and coin
       flips, up to and including round $r$.
       %The adversary constructed in this proof ignores this additional
       %information and uses only the transcript, the currently processed
       %pending message vector, and its own private randomness.
       It may then corrupt additional parties and, for every corrupt party,
       completely control and determine its outgoing message vector for round
       $r$. The resulting vector is called that party's \emph{outgoing message
       vector} for round $r$.

 \item At the end of the round, all message vectors prepared by both correct
       and corrupt parties are delivered simultaneously. They are received by
       their respective recipients only at the beginning of round $r+1$.
\end{enumerate}

%A tagged public action is said to be performed when the message vector
%containing it is released and the action is recorded in the transcript at the
%end of the round. Thus, a pending tagged action that is suppressed after its
%sender is corrupted is not performed.
%A party that the protocol does not schedule to speak in round $r$ prepares the
%deterministic all-$\bot$ message vector. Thus, every party has exactly one
%message-vector opportunity per round, although many such opportunities may be
%deterministic.

At times, for the analysis only, we reveal the already-prepared round-$r$
message vectors sequentially, in a fixed public order $\prec_{\mathrm s}$, as
\emph{turns}. Processing the vector of sender $i_\ell$ is the analytical turn
$(r,\ell)$. At this turn, the adversary examines $M_{i_\ell,r}$ and decides
whether to corrupt $i_\ell$. No party receives a message or changes state
between analytical turns; all message vectors are delivered simultaneously
after the final turn of the round.

More formally, write the sender order as
$i_1\prec_{\mathrm s}\cdots\prec_{\mathrm s}i_n$, and let $G_{r-1}$ be the
transcript at the boundary preceding round $r$. Put $H_{r,0}=G_{r-1}$. Turn
$(r,\ell)$ fixes the message vector of party $i_\ell$ for round $r$, following
the turn description above, and appends it to $H_{r,\ell-1}$, thereby
producing $H_{r,\ell}$. Only after $H_{r,n}$ has been fixed are its recipient
components delivered and the next round-boundary transcript $G_r$ formed. For
$0<\ell<n$, $H_{r,\ell}$ is an adversarial bookkeeping prefix, not a reachable
distributed configuration.

Accordingly, round complexity always counts synchronous communication rounds,
never turns. Through round $R$, the labeled process below has $nR$ turns, but
the slow event it forces is still the event that some correct party has not
decided by the end of round $R$.
%No division or multiplication by $n$ occurs when passing between the two
%descriptions.

%Every party has a binary input. A party is \emph{correct} in an execution if
%it is never corrupted, equivalently if it is forever correct. Perfect
%agreement means that no two correct parties decide differently, in every
%execution and for every realization of the parties' local coin flips.
%Unanimous validity means that, if all correct parties have input $b$, every
%correct party that decides outputs $b$, again for every realization of the
%parties' local coin flips. Almost-sure termination means that every correct
%party eventually decides with probability one.
%Given a protocol $\Pi$, an adversary $\cA$, and an input vector $x$, we let
%$T_{\Pi,\mathcal A,x}$ be the random variable that takes the value of the first
%round by which every forever-correct party has output, on input vector
%$x\in\{0,1\}^n$, when running $\Pi$ in the presence of $\cA$. Here, the
%randomness is taken over the random coin flips of all parties, both correct and
%corrupt. We are now ready to state our main formal theorem.

\begin{samepage}
\begin{theorem}[Main theorem]\label{thm:main}
Let $1\le t<n/3$. Let $\Pi$ be a randomized binary $\BA$ protocol secure
against any $t$-bounded full-information, strongly adaptive, rushing adversary
$\cA$. Then there exist an input vector $x\in\{0,1\}^n$, a $t$-bounded
adversary $\mathcal A$, and universal constants $c_1,c_2,c_3>0$ such that
\[
 \Prb\!\left[
 T_{\Pi,\mathcal A,x}
 \ge c_1\frac{t^2}{n\log(n+1)}-c_2
 \right]\ge c_3.
\]
Consequently,
\[
 \sup_{\mathcal A,x}\E[T_{\Pi,\mathcal A,x}]
 =\Omega\!\left(\frac{t^2}{n\log(n+1)}\right)
 =\widetilde{\Omega}(t^2/n).
\]
\end{theorem}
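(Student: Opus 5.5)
We combine two ingredients, as the abstract indicates. Set $R:=\lfloor c\,t^2/(n\log(n+1))\rfloor$ for a small constant $c$; if $R\le 1$ the statement is trivial by choosing $c_2$ large, so assume $R\ge 2$.

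\textbf{Step 1: a slow event of non-negligible probability.} Fix the protocol $\Pi$ and consider the standard chain of $n+1$ input vectors $x^{(0)},\dots,x^{(n)}$, where $x^{(j)}$ gives input $1$ to the first $j$ parties and $0$ to the rest. In the spirit of \cite{ChorMerrittShmoys89}, we run a valency argument using crash faults only. By validity, $x^{(0)}$ forces output $0$ and $x^{(n)}$ forces output $1$, so some adjacent pair $x^{(j)},x^{(j+1)}$ has an initial configuration that is ``bivalent'' under crash schedules, in a probabilistic sense. We then extend bivalence round by round. In each round we crash at most one party, and choose which one, and which subset of its round messages to drop, among $n^{O(1)}$ options, so that the resulting configuration stays bivalent with probability at least $n^{-O(1)}$ over that round's coins. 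After $R$ rounds, with at most $R\le t/2$ crashes, this yields a family of at most $n^{O(R)}$ candidate crash schedules $\sigma$. The events $E_\sigma=\{$some correct party is undecided at round $R$ under $\sigma\}$ are transcript events, and the argument guarantees that some schedule $\sigma^\ast$ and some input $x$ satisfy $p:=\Prb[E_{\sigma^\ast}]\ge n^{-O(R)}=\exp(-O(R\log n))$. Here $E_{\sigma^\ast}$ is an event of the honest (or lightly crashed) execution's transcript.

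\textbf{Step 2: a multi-round concentration (tampering) lemma.} We prove that for any transcript event $E$ of probability $p$ in the labeled turn process (with $nR$ turns), there is a full-information rushing adversary that forces $E$ with probability one. The number of distinct parties it corrupts is $O(\sqrt{n\log(1/p)})$ in expectation, independently of the number of rounds. The plan follows \cite{EtesamiMahloujifarMahmoody20}/Haitner--Karidi-Heller. At turn $(r,\ell)$, let $f$ be the conditional probability of $E$ given the prefix $H_{r,\ell}$. The adversary corrupts $i_\ell$ and replaces its vector whenever the honest vector would drop $f$ multiplicatively below a threshold, choosing the replacement to keep $f$ from decreasing. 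Because corruption is permanent, the adversary must be label-aware: it may tamper freely with parties that are already corrupt, at no additional cost. So it charges only the first corruption of each party, and bounds the number of distinct charged parties through a potential argument on $\log f$. This is the step I expect to be the main obstacle. The naive per-turn analysis gives $\sqrt{nR\log(1/p)}$, and the task is to show that repeated turns of the same party do not accumulate cost. My first attempt would be a martingale/KL-divergence bound: the expected KL cost summed over each party's turns is controlled jointly, followed by Cauchy--Schwarz over parties.

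\textbf{Step 3: combining.} Apply Step 2 to $E_{\sigma^\ast}$, simulating the at most $R$ crashes of $\sigma^\ast$ as corruptions. The expected number of corruptions is then $R+O(\sqrt{n\cdot R\log n})\le R+O(\sqrt{c}\,t)$. Choosing $c$ small makes this at most $t/4$. By Markov, with probability at least $1/2$ the budget $t$ is never exceeded, and we truncate the adversary, making it stop corrupting once it reaches $t$ parties. On the good event the adversary is $t$-bounded and $E_{\sigma^\ast}$ occurs, so $T_{\Pi,\mathcal A,x}\ge R$ with probability at least $c_3=1/2$. The expectation bound follows immediately, since $\E[T]\ge c_3(c_1t^2/(n\log(n+1))-c_2)$.
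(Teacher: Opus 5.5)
Your architecture matches the paper's: a Chor--Merritt--Shmoys slow event of probability $n^{-O(R)}$, then a label-aware forcing attack using $O(\sqrt{n\log(1/p)})$ distinct corruptions, then Markov and truncation. However, two steps do not go through as written.

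\textbf{Step~3: the slow event is not robust.} You infer $T_{\Pi,\mathcal A,x}\ge R$ from the occurrence of $E_{\sigma^\ast}$, and that inference is invalid. $E_{\sigma^\ast}$ only says that \emph{some} party outside the crash set is undecided at round $R$. After the forcing adversary corrupts an extra set $C$, that witness may lie in $C$: the attack may release, for a corrupted party, any supported vector, for instance one without a decision. Meanwhile every forever-correct party may have decided long before. The forcing lemma gives no control over which party witnesses the event, and the crash-valency argument does not exclude decided live parties. The paper repairs this with a constant-round $\mathsf{READY}/\mathsf{COMMIT}$ termination synchronizer $\widehat{\Pi}$.
\begin{itemize}
\item In $\widehat{\Pi}$, once one forever-correct party decides, all do within $d$ rounds.
\item Hence, in the crash-only base execution, ``not all parties outside $F(S)$ decide by round $R$'' implies ``no party outside $F(S)$ decides by round $R-d$''. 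The latter event has at least the same base probability.
\item It is also robust to extra corruptions: since $|F(S)\cup C|\le t<n$, some $q\notin F(S)\cup C$ is forever correct and undecided.
\item One then projects back to $\Pi$, losing another $d$ rounds. Decisions must also be public tagged actions so that the event is transcript-measurable.
\end{itemize}

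\textbf{Step~1 (minor).} Your round-by-round ``probabilistic bivalence'' is unnecessary and hard to make precise. Coin-dependent crash choices give an adaptive strategy, not one of $n^{O(R)}$ fixed schedules. The paper's cleaner route is to fix all private randomness $\rho$, apply the deterministic valency lemma to $\Pi^\rho$, and average over the $(n+1)|\Sigma_R|$ pairs of input $I_k$ and fixed schedule.

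\textbf{Step~2: the core lemma is not proved.} This step is the heart of the theorem, and you only sketch an intention. With $a=\sqrt{L/n}$, a lower-tail sanitizing rule analyzed via the potential $\log f$ yields only
\[
 \tfrac a4\,\E[\text{cost}]\le L+2\E V,
\]
where $V$ is the total conditional variance of $\log f$ over all $nR$ turns. The $-2w_h$ Jensen loss at each turn is controlled by nothing in that rule. A priori $V$ is only $O(a^2\cdot nR)=O(RL)$, which gives nothing better than the round-by-round bound. Your suggested per-party KL plus Cauchy--Schwarz also fails for an online adversary. Deciding turn by turn whether to corrupt (a per-turn coupling) charges a label about $\sum_s\mathrm{TV}_s\approx\sum_s\sqrt{\mathrm{KL}_s}$ over its turns, not $\sqrt{\sum_s\mathrm{KL}_s}$. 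Additional structure is needed.

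The missing idea in the paper, on top of lower-tail sanitization, works as follows.
\begin{itemize}
\item Each label's ordinary turns are cut into \emph{chunks} of accumulated conditional variance about $a^2$. Each chunk is marked independently with probability $c_0=\Theta(a)$.
\item In a marked chunk, the attack corrupts the label once. It then samples from the tilted kernel $S_h^1=S_h\bigl(1+\alpha(X_h-\mu_h)\bigr)$ with $\alpha=\Theta(1/a)$. This adds drift $\alpha w_h$, which dominates the $2w_h$ loss.
\item Turns where some outcome has $r_h\ge e^{2a}$ get a direct corruption that gains at least $a$ in potential.
\item A KL/Pinsker argument per chunk keeps the posterior of the hidden mark at least $c_0/3$, except on rare bad chunks.
\item Since the number of chunks is at most $n+V/a^2$, this gives $\E V=O(L)$ and expected cost $O(an+L/a)=O(\sqrt{nL})$.
\end{itemize}
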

\end{samepage}

\begin{remark}
For simplicity and clarity of exposition, we prove the lower bound for Las
Vegas protocols, i.e., protocols that always achieve validity and agreement
and terminate with probability one. Our proof can easily be generalized to
protocols achieving agreement with probability $1-\epsilon$ by incorporating
the non-agreement event into the event constructed in
\cref{sec:crash-cover}. In the theorem above, this would translate to the
probability of not terminating by the required bound being lower-bounded by
$c_3-\epsilon$.
\end{remark}

%For $n\le 3t$, perfectly correct Byzantine Agreement is impossible in this
%model, so $n\ge 3t+1$ is the only non-vacuous range.

\section{Proof overview}

\paragraph{Ben-Or \& Bar-Joseph.}
The lower bound of Bar-Joseph and Ben-Or \cite{BarJosephBenOr98} follows a
round-by-round valency argument. Every configuration is classified as either
$0$-valent, $1$-valent, bivalent, or null-valent, depending on the probability
of the crash adversary being able to force the outcome to $0$, $1$, both, or
neither. Bivalent configurations can be handled with standard tools to create
a bivalent configuration for the next round with at most a single crash
fault. The main difficulty lies with null-valent configurations, from which
the adversary cannot make either decision value overwhelmingly likely. After
the parties generate their randomness for the next round, Bar-Joseph and
Ben-Or encode the resulting valency of the next configuration as the outcome
of a one-round coin-flipping game. Their concentration argument biases this
game in the required direction to maintain the null-valency of the
configuration with high probability by crashing $O(\sqrt{n\log n})$
additional parties.

In this round-by-round treatment, a given configuration in some round is
treated as falling into one of the four valency classes in a black-box manner,
disregarding the adversary's behavior in previous rounds. Consequently, the
one-round concentration argument must be invoked afresh in every round, and
the adversary pays $O(\sqrt{n\log n})$ new crashes per round. As they show, this
loss is essentially tight for crash adversaries.

\paragraph{The global approach.}
Nevertheless, the crash-schedule argument underlying the error-probability
lower bound of Chor, Merritt, and Shmoys \cite{ChorMerrittShmoys89} hints at
additional global structure. For every fixed realization of the parties'
randomness, which induces a deterministic protocol, the standard valency
approach shows that an $R$-round execution can be forced not to terminate by
crashing at most $R$ parties, at most one in each round. It is here that Chor,
Merritt, and Shmoys make a simple yet powerful observation: The number of
distinct adversaries crashing at most one party in each round is \emph{not too
large}. Namely, it is bounded by $n^{O(R)}$. Averaging over these adversaries
and the protocol randomness produces one specific such adversary under which
the randomized protocol remains unfinished through round $R$ with probability
\[
    p \ge n^{-O(R)}.
\]
This event may be rare and hence does not by itself yield an expected-round
lower bound, but its rarity is controlled:
\[
    \log(1/p)=O(R\log n).
\]

We have therefore reduced the problem to forcing a single event of probability
$p\ge n^{-O(R)}$ over the product probability space of the parties' private
randomness. This is precisely where the concentration-of-measure approach of
Etesami, Mahloujifar, and Mahmoody
\cite{EtesamiMahloujifarMahmoody20} and its many-turn generalization by
Haitner and Karidi-Heller \cite{HaitnerKaridiHeller26} become relevant. Their
perspective suggests that Byzantine corruptions can replace the round-by-round
valency attack with a single global attack that gradually steers the execution
toward the desired slow event. In their paper, Haitner and Karidi-Heller
employed this approach to prove that no coin-flipping protocol can remain unbiasable amidst
more than $O(\sqrt{n})$ Byzantine corruptions against a full-information,
adaptive adversary.

We adapt these tools to randomized BA protocols. Let $E$ be any event whose
membership is determined by the transcript through some globally known finite
round, and suppose that $E$ has probability $p>0$ over the local randomness of
the parties. We prove that a full-information, adaptive, rushing adversary can
make $E$ occur with probability one while corrupting
\[
  O\!\left(\sqrt{n\log(1/p)}\right)
\]
distinct parties in expectation. Truncating this attack gives a worst-case
corruption budget at the cost of reducing its success probability to a
constant.

With these ideas in mind, the proof of our main theorem proceeds in three
steps.

\paragraph{Step 1: Normalization.}
We first run a constant-round termination synchronizer alongside the given
protocol $\Pi$. A party that outputs $v$ in $\Pi$ explicitly announces
$\mathsf{READY}(v)$, and the synchronizer uses this announcement and
$\mathsf{COMMIT}$ messages before performing a public $\mathsf{DECIDE}(v)$
action. Let $\PiHat$ denote the resulting protocol. For a universal constant
$d$, the synchronizer has two properties. First, if every forever-correct
party decides in $\Pi$ by round $r$, then every forever-correct party decides
in $\PiHat$ by round $r+d$. Second, in an execution containing only the
crashes of a fixed schedule $S$, once one forever-correct party decides in
$\PiHat$, every forever-correct party decides in $\PiHat$ within $d$ rounds.
The first property allows us to project a round lower bound for $\PiHat$ back
to $\Pi$, with only a constant loss. The second property makes the slow event
constructed in the next step robust to the additional corruptions introduced
in Step~3. See \cref{sec:synchronizer} for more details.

\paragraph{Step 2: A rare robust slow event.}
Fix a horizon $R$ and sample one complete realization of each party's private
randomness. Fixing all of this randomness turns $\PiHat$ into a deterministic
consensus protocol under every counterfactual input and crash schedule. For
every fixed randomness vector $\rho$, the crash-schedule valency argument of
\cite{DS83} supplies an input
\[
 I_k=(1^k,0^{n-k})
\]
and a crash schedule $S$ involving at most $R$ parties under which not all
correct parties decide by round $R$. Applying the valency argument to the
normalized $\PiHat$ allows us to deduce that no party has decided in $\PiHat$
by round $R-d$.

There are at most
\[
 (n+1)|\Sigma_R|
 \le (n+1)\bigl(1+n(n+1)\bigr)^R
 =\exp(O((R+1)\log n))
\]
possible pairs $(k,S)$. Averaging over the protocol randomness therefore gives
one fixed pair $(k^\star,S^\star)$, chosen independently of the realized
randomness, for which the robust slow event
\[
 E=\left\{
   \begin{array}{l}
   \text{no non-crashed party has decided by the end of round $R-d$}
   \end{array}
 \right\}
\]
has probability
\[
 p=\Prb[E]
 \ge \frac{1}{(n+1)|\Sigma_R|}
 =n^{-O(R)}.
\]
See \cref{sec:crash-cover} for more details.

\paragraph{Step 3: Forcing the slow event.}
We construct a \emph{forcing attack} for finite transcript processes: an event
of probability $p>0$ can be forced with probability one using an expected
\[
 O\!\left(\sqrt{n\log(1/p)}\right)
\]
distinct party corruptions. All turns of one party carry the same label, so
corrupting that label once permits the adversary to control the party in every
later round without paying again. The precise budgeted lemma says that, for
universal constants $a_0,A>0$, if
\[
 L:=\log(1/p)\le a_0^2n
 \quad\text{and}\quad
 4A\sqrt{nL}\le B,
\]
then an attack using at most $B$ additional corruptions on every execution
forces the event with probability at least $3/4$.

For the event from Step~2, $L=O((R+1)\log n)$. We choose
\[
 R=\Theta\!\left(\frac{t^2}{n\log(n+1)}\right)
\]
with a sufficiently small hidden constant and invoke the lemma with additional
budget $B=t-R$. Its two hypotheses then hold, while the base crash schedule
uses at most $R$ faults. Hence, the total fault set has size at most $t$ on
every execution, and the robust slow event occurs with probability at least
$3/4$. On that event, some forever-correct party remains undecided in $\PiHat$
through round $R-d$. The projection from Step~1 implies that some
forever-correct party remains undecided in $\Pi$ through round $R-O(1)$,
proving the claimed expected-round lower bound. The remaining bounded-$R$
regime follows from the elementary fact that, on a suitable mixed input, some
correct party cannot decide before communication. See
\cref{sec:forcing_lemma} for more details.

\section{Labelled full-information processes}\label{sec:label-preliminaries}

This section serves two purposes: First, it formally defines the notion of a labeled transcript process, the mathematical structure on top of which we then prove our forcing lemma. It then shows how any adversary in that setting can be implemented by a full-information, adaptive, rushing adversary in a synchronous protocol.

\subsection{Baseline transcript processes}

\begin{definition}[Finite-horizon labelled transcript process]\label{def:label-process}
A finite-horizon labelled transcript process consists of a rooted prefix space
$\mathcal H$ of finite depth, a set of terminal prefixes
$\mathcal H_{\mathrm{term}}$, and, for every nonterminal prefix
$h\in\mathcal H$:
\begin{enumerate}[label=(\roman*),leftmargin=2.4em]
 \item a persistent label $\lambda(h)\in[n]$ identifying the party whose turn
       occurs at $h$;
 \item a countable outcome space $\mathcal X_h$; and
 \item a conditional probability kernel $P_h$ on $\mathcal X_h$.
\end{enumerate}
If $M\sim P_h$, the next transcript prefix is $hM$.  The kernels induce a
baseline probability distribution $P$ on terminal transcripts.  A transcript event is
a subset $E\subseteq\mathcal H_{\mathrm{term}}$.
\end{definition}

A transcript prefix records only the protocol actions and message vectors that
are released in the simulated execution.  It does not record a pending message
vector that the adversary suppresses, the adversary's private coins or marks,
or the physical fact that a label has already been corrupted.  These variables
may be present in the adversary's private view but are not part of the
filtration defining $P_h$.

The label map may depend on the prefix, although the synchronous application
uses the fixed public sender order $\prec_{\mathrm s}$ within every physical
round.  The number of turns may be much larger than $n$.  What persists is the
label: turns $(r,\ell)$ and $(r',\ell)$ both carry the persistent label
$i_\ell$.  These turns serialize the adversary's treatment of already-prepared
message vectors; they do not serialize the honest parties' local computation.
For a horizon of $R$ physical rounds, deterministic all-$\bot$ turns are
included for parties that have already crashed, so the process has exactly
$nR$ analytical turns.  The deterministic delivery and state transition after
turn $(r,n)$ are part of the closure that forms $G_r$ and incur no additional
turn or corruption charge.

\begin{definition}[Strong label-tampering attack]\label{def:label-tampering}
A strong label-tampering attack maintains a permanent corrupted-label set
$C\subseteq[n]$, initially empty.  At a prefix $h$ with current label
$i=\lambda(h)$:
\begin{enumerate}[label=(\roman*),leftmargin=2.4em]
 \item If $i\notin C$, a pending outcome $\widetilde M\sim P_h$ is sampled and
       revealed to the attack.  The attack either releases $\widetilde M$, or
       adds $i$ to $C$ and replaces it by an outcome sampled from an arbitrary
       attack kernel supported on $\Supp(P_h)$.
 \item If $i\in C$, the attack directly samples the outcome from an
       arbitrary attack kernel supported on $\Supp(P_h)$.
\end{enumerate}
The cost of the attack is $|C|$, the number of distinct corrupted labels, not
the number of modified turns.
\end{definition}

Restricting replacement outcomes to $\Supp(P_h)$ only weakens the adversary.
It is analytically crucial(!) because every attacked prefix remains in the support
of the baseline distribution and every conditional probability below remains well-defined.
The forcing attack will use only such outcomes.

%For an event $E$ and a supported prefix $h$, define
%\begin{equation}\label{eq:q-prelim}
% q(h)=P(E\mid h).
%\end{equation}
%If the next outcome is $m$, define its multiplicative likelihood
%increment
%\begin{equation}\label{eq:r-prelim}
 %r_h(m)=\frac{q(hm)}{q(h)}.
%\end{equation}
%Under the baseline kernel,
%\begin{equation}\label{eq:doob-prelim}
 %\E_{m\sim P_h}r_h(m)=1.
%\end{equation}
%Thus $q(h)$ is the Doob conditional-probability martingale for $E$, while
%$\log r_h(m)$ is the one-turn increment of the logarithmic potential
%$\log q$.  The forcing proof controls the number of persistent labels needed
%to steer this potential from $\log p$ to zero.

\subsection{From a distributed execution to a labelled process}

Fix a public recipient order
$j_1\prec_{\mathrm r}\cdots\prec_{\mathrm r}j_n$.  A crash schedule $S$
assigns each party $i$ a crash round
$\tau_i\in\mathbb N\cup\{\infty\}$ and, if $\tau_i<\infty$, a cutoff
$c_i\in\{0,\ldots,n\}$.  Its finite crash rounds are distinct.  Let
$F(S)=\{i:\tau_i<\infty\}$ and define
\begin{equation}\label{eq:crash-transform}
 \bigl(\Gamma^S_{i,r}(m)\bigr)_{j_k}
 =
 \begin{cases}
  m_{j_k}, & r<\tau_i,\text{ or }r=\tau_i\text{ and }k\le c_i,\\
  \bot,    & \text{otherwise}.
 \end{cases}
\end{equation}
Thus, in its crash round, $i$ prepares its honest message vector before the
cutoff is applied, and is silent thereafter.

Fix a protocol $\Pi$, an input vector $x$, a horizon $R$, and a deterministic
crash schedule $S$ whose crashes occur by round $R$.  Let
$\rho_i\sim\mu_i$ be the mutually independent private randomness of party
$i$.  For a supported round-boundary transcript $g$, the honest round-$r$
message vector of a party with $r\le\tau_i$ is a deterministic function
$M_{i,r}(g|_i,\rho_i)$ of its local history and private randomness.  Put
\[
 Y_{i,r}=
 \begin{cases}
  \Gamma^S_{i,r}\bigl(M_{i,r}(G_{r-1}|_i,\rho_i)\bigr), & r\le\tau_i,\\
  (\bot,\ldots,\bot), & r>\tau_i.
 \end{cases}
\]
Using the analytical prefixes defined above, for
$h=H_{r,\ell-1}$ set
\begin{equation}\label{eq:distributed-process-kernel}
 \lambda(h)=i_\ell,
 \qquad
 P_h(m)=\Prb[Y_{i_\ell,r}=m\mid H_{r,\ell-1}=h].
\end{equation}
The label is the physical party identity: it recurs in every round with a
kernel determined by the current prefix, and is charged only once if
corrupted.

\begin{lemma}[Exact protocol--process correspondence]
\label{lem:protocol-to-process}
The kernels in~\eqref{eq:distributed-process-kernel}, followed after every
$n$ turns by the deterministic round closure, generate exactly the execution
of $\Pi$ on $(x,S)$ through round $R$.  Moreover, every support-preserving
strong label-tampering attack $\mathcal T$ on this process is implementable by
a full-information, strongly adaptive, rushing Byzantine adversary
$\mathcal A_{\mathcal T}$.  Under a coupling, their released transcripts
agree after every analytical turn, and the final faulty-party set is
$F(S)\cup C$, where $C$ is the final corrupted-label set of $\mathcal T$.
\end{lemma}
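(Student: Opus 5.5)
The plan is to prove the two claims separately. First comes an exact distributional identity between the crash-schedule execution and the labelled process. Then comes an explicit construction of $\mathcal A_{\mathcal T}$, followed by a coupling argument by induction over analytical turns.

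\textbf{Correspondence.} I will proceed by induction on rounds, with an inner induction on turns. The inductive hypothesis is that the law of $H_{r,\ell}$ under the process equals its law in the execution of $\Pi$ on $(x,S)$.

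Within round $r$, the honest vectors $Y_{i,r}$ are deterministic functions of $G_{r-1}$ and $\rho_i$, and $\Gamma^S$ is deterministic. Hence the joint law of $(Y_{i_1,r},\dots,Y_{i_n,r})$ given $G_{r-1}$ factors, by the chain rule, as the product of the conditionals $\Prb[Y_{i_\ell,r}=\cdot\mid H_{r,\ell-1}]$. These conditionals are exactly the kernels in~\eqref{eq:distributed-process-kernel}.

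The subtle point is that the $\rho_i$ are reused across rounds. Conditioning on the full prefix $H_{r,\ell-1}$, rather than on $G_{r-1}$ alone, is what makes the chain rule exact. No independence across parties is claimed once transcripts correlate them, and none is needed, since the kernel is defined as a conditional probability. Once $H_{r,n}$ is fixed, the round closure delivers the recipient components deterministically and yields $G_r$, matching the synchronous semantics (iii). Crashed parties contribute the deterministic all-$\bot$ turn.

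\textbf{Implementation.} I construct $\mathcal A_{\mathcal T}$ as follows. It applies the crash schedule $S$ by corrupting each $i\in F(S)$ in round $\tau_i$ and sending $\Gamma^S_{i,\tau_i}$ of the honest vector, then silence. In each round $r$ it uses rushing: it observes all prepared vectors $M_{i,r}$, and hence all $Y_{i,r}$, then simulates $\mathcal T$ turn by turn in the order $\prec_{\mathrm s}$. For a turn with $i_\ell\notin C$, the pending outcome handed to $\mathcal T$ is the actual $Y_{i_\ell,r}$. By the correspondence, its conditional law given the simulated prefix is $P_h$, so this faithfully realizes step (i) of \cref{def:label-tampering}. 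If $\mathcal T$ chooses to replace it, $\mathcal A_{\mathcal T}$ corrupts $i_\ell$ before delivery and sends the replacement, which $\mathcal T$ samples using the adversary's private coins. For $i_\ell\in C$, the party is already corrupt, and $\mathcal A_{\mathcal T}$ sends the kernel's sample.

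Support preservation ensures every released vector is one the protocol could produce. It therefore lies in $\Supp(P_h)$, and the subsequent kernels remain well defined along the simulated prefix.

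\textbf{Coupling and the main obstacle.} The coupling uses the same $\rho$ and the same attack coins. By induction, the released transcripts agree after every turn, and the faulty set is $F(S)\cup C$.

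The main obstacle is justifying that conditioning remains correct after tampering. Once some parties are corrupted, the honest parties' future vectors depend on $\rho_i$ conditioned on a transcript produced partly by the adversary. I must check that $\Prb[Y_{i,r}=\cdot\mid H_{r,\ell-1}=h]$, computed under the baseline, still equals the conditional law of the true pending vector under the attack for uncorrupted $i$. My approach is to argue that, for $i\notin C$, the posterior of $\rho_i$ given the attacked prefix $h$ equals the baseline posterior given $h$. This holds because the attack's choices are functions of the prefix, the adversary's coins, and suppressed pending values of \emph{corrupted} parties only; those values are independent of $\rho_i$ given $h$.

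Making this precise is delicate. The adversary's full information includes states, but its \emph{decisions} in $\mathcal A_{\mathcal T}$ use only the simulated view of $\mathcal T$. Moreover, the suppressed vectors are functions of the corrupted parties' $\rho_j$, which are independent of $\rho_i$. I expect to handle this by a likelihood factorization: the probability of a prefix $h$ under the attack factors as a product of a term depending on honest parties' $\rho$ (through indicators that their released vectors match $h$) and a term independent of them.
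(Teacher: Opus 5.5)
Your proposal is correct and follows essentially the paper's route. You identify the kernels as conditional laws of the cutoff-applied honest vectors, simulate $\mathcal T$ turn by turn within each round via rushing while exposing only the current sender's pending vector, and maintain the invariant that every uncorrupted party's randomness keeps its baseline conditional law given the released prefix and $\mathcal T$'s private history. The likelihood factorization you anticipate is exactly the paper's rectangle identity $\{H_{r,\ell-1}=h\}=\bigcap_{i}\{\rho_i\in A_i(h)\}$. This identity also shows that conditioning on a transcript never correlates different parties' randomness, contrary to your aside in the first part, and it yields $P_h(m)=\Prb[Y_{i_\ell,r}=m\mid G_{r-1}=g]$, so the same fact drives both halves of the proof.
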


\begin{proof}
Fix a supported analytical prefix
$h=(g,y_{i_1,r},\ldots,y_{i_{\ell-1},r})$.  For each party $i$, let $A_i(h)$
be the set of realizations of $\rho_i$ consistent with the actions of $i$
recorded in $h$.  Since $h$ fixes every delivered local history,
\[
 \{H_{r,\ell-1}=h\}
 =\bigcap_{i=1}^n\{\rho_i\in A_i(h)\}.
\]
Taking $\ell=1$, independence of the parties' private randomness shows that
their randomness remains independent conditional on $G_{r-1}=g$.  Since all
round-$r$ vectors are prepared from $g$ before the first analytical turn and
each sender occurs only once in the round, these vectors are conditionally
independent.  Hence
\begin{equation}\label{eq:kernel-identification}
 P_h(m)=\Prb[Y_{i_\ell,r}=m\mid G_{r-1}=g].
\end{equation}
Sequentially exposing these vectors and then delivering them simultaneously
therefore reproduces the execution round by round.

We now implement $\mathcal T$.  In every physical round,
$\mathcal A_{\mathcal T}$ lets every party that has not previously been
corrupted prepare its honest message vector.  It then simulates the $n$ turns
of $\mathcal T$ in sender order, before any vector is delivered, maintaining
the same released prefix $h$.  Although it sees all pending vectors, it gives
$\mathcal T$ only the vector of the sender currently being processed.

Consider the turn of sender $i$, and let $C_u$ be the current
corrupted-label set.  If $i\notin C_u$ and $r\le\tau_i$, form the candidate
$\widetilde m=\Gamma^S_{i,r}(M_{i,r})$; if $r>\tau_i$, let
$\widetilde m$ be the all-$\bot$ vector.  Give $\widetilde m$ to
$\mathcal T$.  If it is released, use it as $i$'s physical message vector.
Otherwise $\mathcal T$ adds $i$ to its corrupted set; corrupt $i$ physically
unless it is already faulty, discard $\widetilde m$, and use the supported
replacement sampled by $\mathcal T$.
If $i\in C_u$ before the turn, use directly the outcome sampled from
$\mathcal T$'s attack kernel.  If $r=\tau_i$ and $i$ is not already faulty,
the scheduled physical corruption is made after the honest vector is prepared
and before the chosen vector is released.  Support preservation ensures that
every vector used in round $\tau_i$ obeys the scheduled cutoff and that every
later vector is all-$\bot$.

It remains to verify that each candidate $\widetilde m$ has distribution
$P_h$.  Inductively, conditional on the common prefix and the private history
of $\mathcal T$, the private randomness of parties that are neither in $C_u$
nor previously crashed remains independent with its baseline conditional
distribution.  A released candidate is recorded in the common prefix; a
suppressed candidate belongs to a sender immediately added to $C_u$; and the
simulation does not use an unprocessed sender's pending vector.  Thus the
invariant is preserved, and \eqref{eq:kernel-identification} gives
$\widetilde m\sim P_h$ at the next turn.

After all $n$ turns, the same vectors are delivered simultaneously in both
executions, so the induction continues to the next round.  The transcripts
therefore agree turn by turn.  The only physical corruptions are those in
$F(S)\cup C$, with overlap counted once, and every replacement occurs after
preparation and before delivery.  Hence $\mathcal A_{\mathcal T}$ is a legal
full-information, adaptive, rushing adversary.
\end{proof}

\section{Label forcing lemma}\label{sec:forcing_lemma}

We now prove the probabilistic ingredient for the process of
Definition~\ref{def:label-process}.

\begin{lemma}[Budgeted label-aware forcing]\label{lem:forcing}
There exist absolute constants $a_0\in(0,1)$ and $A\ge1$ such that the
following holds.  Let $P$ be the distribution over labeled finite 
transcripts, possibly in the presence of a fixed base adversary with fault set
$F\subseteq[n]$, and let $E$ be a transcript event with
\[
 p=P(E)>0,
 \qquad
 L=\log(1/p).
\]
Let $B$ be an integer satisfying $1\le B\le n-|F|$.  Assume that
\begin{equation}\label{eq:forcing-main-regime-assumption}
 L\le a_0^2n
\end{equation}
and
\begin{equation}\label{eq:forcing-budget-assumption}
 4A\sqrt{nL}\le B.
\end{equation}
Then there is a label-tampering attack with an internally selected label set
$\widehat C$ whose additional corrupted-label set
\[
 C:=\widehat C\setminus F\subseteq[n]\setminus F
\]
has size at most $B$ on every execution and satisfies
\[
 \Prb[E]\ge\frac34.
\]
In particular, the total faulty set is $F\cup C$ and satisfies
$|F\cup C|\le |F|+B$.
\end{lemma}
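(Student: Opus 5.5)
We will prove the lemma in two stages. First we build an \emph{unbudgeted} forcing attack that makes $E$ occur with probability one while corrupting $O(\sqrt{nL})$ distinct labels in expectation. Then we truncate it with Markov's inequality. The natural potential is the Doob martingale $q(h)=P(E\mid h)$, where $P$ is the baseline measure (including the fixed base adversary with fault set $F$). Its log-potential $\phi(h)=\log q(h)$ starts at $-L$, and $E$ holds exactly when $\phi$ reaches $0$ at a terminal prefix. Write $r_h(m)=q(hm)/q(h)$, so that $\E_{m\sim P_h}r_h(m)=1$.

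\textbf{The attack.} At a prefix $h$ with label $i=\lambda(h)$:
\begin{itemize}
\item If $i\in C$ or $i\in F$, the attack samples from the tilted kernel $P_h(m)r_h(m)=P(hm\mid h,E)$. This kernel is supported in $\Supp(P_h)$ and costs nothing, since labels in $F$ are charged to the base schedule.
\item If $i\notin C\cup F$, the pending outcome $\widetilde M\sim P_h$ is revealed. The attack keeps it with probability $\min\{1,r_h(\widetilde M)/K_h\}$ for a normalizer $K_h\ge\max_m r_h(m)$, or more cleverly via a rejection-sampling rule. Otherwise it corrupts $i$ and resamples from the residual needed so that the overall output law is exactly $P(\cdot\mid h,E)$.
\end{itemize}
The cleanest choice follows Haitner--Karidi-Heller and Etesami et al.: release $\widetilde M$ with probability $\min\{1,r_h(\widetilde M)\}$, and on rejection corrupt and sample from the normalized positive part $(r_h-1)_+P_h$. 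The output law is then exactly $r_hP_h$, so the final transcript is distributed as $P(\cdot\mid E)$. In particular $E$ holds with probability one, and every prefix stays in the support.

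\textbf{Cost bound.} The probability of corrupting at $h$ is $\E_{P_h}(1-r_h)_+$, the total variation distance between $P_h$ and $r_hP_h$. Under the attacked (i.e.\ $E$-conditioned) measure $Q$, the expected number of new corruptions equals $\E_Q\sum_{i\notin F}\mathbf 1[\text{$i$ ever corrupted}]$. The key point is to charge each label only once. A single label's total contribution is at most $\min\{1,\sum_{\text{turns of }i}\delta_h\}$, where $\delta_h=d_{TV}(P_h,Q_h)$. Using Pinsker, $\delta_h\le\sqrt{\mathrm{KL}(Q_h\|P_h)/2}$, and the chain rule gives
\[
\sum_i \E_Q\Bigl[\textstyle\sum_{h:\lambda(h)=i}\mathrm{KL}(Q_h\|P_h)\Bigr]=\mathrm{KL}(Q\|P)=L .
\]
Writing $L_i$ for label $i$'s share, Cauchy--Schwarz over the $n$ labels gives $\sum_i\min\{1,\sqrt{L_i}\}\le\sqrt{nL}$.

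The main obstacle is exactly this step, because a label's charge is a sum of TV distances over its many turns, not the square root of a sum of KLs. I expect to handle it by bounding the probability that label $i$ is ever corrupted by a sequential Hellinger or Bhattacharyya-type argument. Concretely, the probability that a "coupling never fails" across the turns of $i$ is at least $\exp(-O(\text{accumulated KL}))$-like, or more precisely $1-O(\sqrt{L_i})$. To get this, I would realize the corruption decision for label $i$ via a maximal coupling of its whole future sequence conditional on others. If that proves delicate, the fallback is the multiplicative-tampering argument of Etesami et al.: define per-label thresholds on the cumulative log-increment $\sum\log r_h$ restricted to $i$'s turns, and corrupt $i$ only once this drifts. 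Azuma on $\phi$ then controls how many labels must be spent, with the hypothesis $L\le a_0^2n$ keeping each per-label increment small enough for the quadratic approximation $\log r\approx (r-1)-(r-1)^2/2$.

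\textbf{Truncation.} Given $\E|C|\le A\sqrt{nL}$, cap the attack: once $|C|$ would exceed $B\ge 4A\sqrt{nL}$, stop corrupting new labels. Labels already in $C$ may continue tampering. By Markov, the cap is hit with probability at most $1/4$ under the coupled unbudgeted run. Off that event the two runs coincide, so $\Prb[E]\ge 3/4$ and $|C|\le B$ always. Finally, since $C$ is taken disjoint from $F$, we get $|F\cup C|\le|F|+B$, and the condition $B\le n-|F|$ guarantees that enough fresh labels are available.
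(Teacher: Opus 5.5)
Your truncation step and your treatment of $F$ are fine and agree with the paper. So everything hinges on the bound $O(\sqrt{nL})$ on the expected number of labels the untruncated attack corrupts, and that is where the proposal has a genuine gap.

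\textbf{Why your attack fails the bound.} The attack you actually define, independent per-turn rejection sampling from $P_h$ toward $Q_h=r_hP_h$, does not satisfy this bound once labels speak many times. Label $i$ stays clean only if it is never rejected, so its cost is governed by $\sum_{h:\lambda(h)=i}\delta_h$. A sum of per-turn square roots $\sum_h\sqrt{\mathrm{KL}_h}$ can exceed $\sqrt{\sum_h\mathrm{KL}_h}$ by a factor $\sqrt T$ for a label with $T$ turns.

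For example, let each of $n$ labels flip $T$ fair bits, and let $E$ say the total is $k$ standard deviations above its mean, with $1\le k\ll\sqrt n$; then $L=\Theta(k^2)$. Under $P(\cdot\mid E)$, essentially every turn has $\delta_h$ of order at least $k/\sqrt{nT}$, so each label accumulates order $k\sqrt{T/n}$. For $T\gg n$ almost every label is rejected somewhere, and the cost is about $n\gg k\sqrt n=\Theta(\sqrt{nL})$.

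Your Pinsker/chain-rule/Cauchy--Schwarz computation bounds $\sum_i\sqrt{L_i}$, which is simply not an upper bound on the cost of your attack. Charging a label once across many turns is precisely the content of the lemma. Neither remedy you mention is carried out:
\begin{itemize}
\item An offline maximal coupling of $i$'s whole sequence cannot be run by an online adversary. It sees only the current pending sample, and $i$'s future kernels depend on the other labels' tampered outcomes.
\item The threshold/Azuma fallback is too unspecified to check.
\end{itemize}

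\textbf{How the paper closes it.} The paper avoids per-turn charging altogether. It tracks $\Phi=\log q$, which must rise by exactly $L$.
\begin{itemize}
\item It corrupts on a large turn only when $\Phi$ jumps by at least $a=\sqrt{L/n}$.
\item It sanitizes lower tails $r_h\le e^{-a}$, each such corruption buying drift at least $\tfrac a4\,b_h$.
\item It offsets the Jensen loss $-2w_h$ by cutting each label's ordinary turns into chunks of accumulated variance about $a^2$. Each chunk is marked independently with probability $c_0=\Theta(a)$, and marked chunks are tilted, so one corruption pays for a whole chunk.
\item A likelihood-ratio argument keeps the mark posterior at least $c_0/3$ outside a rare event. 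This yields $\E V=O(L)$, and hence $O(L/a)=O(\sqrt{nL})$ corruptions of each kind.
\end{itemize}

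\textbf{An online version of your coupling idea.} Your idea can also be made rigorous, but only in an online form different from what you wrote. Let $\Lambda^{(i)}$ be the product of $r_h$ over $i$'s own turns, and put $g^{(i)}(h)=\min\{1,\min_{h'\preceq h}\Lambda^{(i)}(h')\}$.
\begin{itemize}
\item Release $i$'s pending sample $m$ with probability $g^{(i)}(hm)/g^{(i)}(h)$.
\item Let newly or previously corrupted labels sample from the normalized residual $Q_h(m)-g^{(i)}(hm)P_h(m)/\Lambda^{(i)}(h)$.
\end{itemize}
One checks that the released transcript is still distributed as $P(\cdot\mid E)$. Moreover, $\Prb[i\text{ corrupted}]=\E_{P^{(i)}}\max_s(1-\Lambda^{(i)}_s)_+$, where the hybrid $P^{(i)}$ uses $P_h$ on $i$'s turns and $Q_h$ elsewhere, so $\Lambda^{(i)}$ is a $P^{(i)}$-martingale with terminal value $dQ/dP^{(i)}$. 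Doob's $L^2$ inequality, together with $\E_{P^{(i)}}(1-\sqrt{\Lambda^{(i)}})^2\le D(Q\Vert P^{(i)})$, bounds this by $4\sqrt{D(Q\Vert P^{(i)})}$. Finally, $\sum_i D(Q\Vert P^{(i)})=L$ and Cauchy--Schwarz give $O(\sqrt{nL})$.

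Either route needs a per-label mechanism plus a maximal or posterior estimate, and your proposal supplies neither.
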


\begin{proof}
We commence the proof by defining useful terms and setting up several constants.
For a supported transcript prefix $h$, define
\[
 q(h)=P(E\mid h).
\]
For the next baseline conditional kernel $P_h$, put
\[
 r_h(m)=\frac{q(hm)}{q(h)}.
\]
As long as $q(h)>0$,
\begin{equation}\label{eq:martingale-ratio}
 \E_{m\sim P_h} r_h(m)=1.
\end{equation}
Discarded pending samples, corruption status, and the attack's private marks
are not included in $h$.

%If $p=1$, use the passive attack: at every uncorrupted turn release the
%pending $P_h$-sample and never corrupt a label.  It reaches $E$ with
%probability one.  

If $p=1$, the lemma follows from the passive attack, which adds no corrupted
label and follows the baseline kernel at every turn.

Hence assume $p<1$, and set
\[
 a=\sqrt{L/n}>0.
\]

\noindent
\textbf{Sanitizing choice.} For every supported prefix $h$ with $q(h)>0$,
averaging in \eqref{eq:martingale-ratio} and countability of $\Supp(P_h)$ give
an outcome $m_0(h)\in\Supp(P_h)$ such that $r_h(m_0(h))\ge1$.  We call
$m_0(h)$ the \emph{sanitizing choice} at $h$.

\noindent
\textbf{Large turn choice.} For every supported prefix $h$ with $q(h)>0$ satisfying
\[
 \sup_{m\in\Supp(P_h)}r_h(m)\ge e^{2a},
\]
choose a supported $m_+(h)$ with $r_h(m_+(h))\ge e^a$.  These choices exist
by the definition of the supremum. We refer to $m_+(h)$ as the \emph{large turn choice} of $h$.
%For every supported prefix $h$ with $q(h)>0$, choose a supported outcome
%$m_0(h)$ with $r_h(m_0(h))\ge1$; such a choice exists by
%\eqref{eq:martingale-ratio} and countability of the outcome spaces.

Fix absolute constants $\eta\in(0,1/2]$ sufficiently small and $K_0>0$
sufficiently large; their precise choices will be specified in the analysis.
Having fixed $\eta$ and $K_0$, choose
\[
 0<a_0\le\min\left\{\frac13,\frac{\eta}{4K_0}\right\}.
\]
Consequently, whenever $a\le a_0$ (which holds due to \eqref{eq:forcing-main-regime-assumption}),
\begin{equation}\label{eq:small-a-basic}
 \frac{K_0a}{\eta}\le\frac14,
 \qquad
 1-e^{-a}\ge\frac a2.
\end{equation}
Whenever $a\le a_0$, for every $b\in[0,1]$ we also have
\begin{equation}\label{eq:small-a-log}
 \log\left(1+\frac a2b\right)\ge\frac a4b.
\end{equation}
This follows from $\log(1+x)\ge x/2$ for $0\le x\le1$.
Moreover, whenever $a\le a_0$, every random variable $X$ supported on
$[-a,2a]$ satisfies
\begin{equation}\label{eq:small-a-jensen}
 \log\E e^X-\E X\le2\Var(X).
\end{equation}
Indeed, for $Y=X-\E X$ we have $|Y|\le3a\le1$, and hence
$e^Y\le1+Y+\frac e2Y^2$.  Taking expectations and using
$\log(1+x)\le x$ proves \eqref{eq:small-a-jensen}.

By \eqref{eq:forcing-main-regime-assumption}, $a\le a_0$.  Set
\[
 c_0=\frac{K_0a}{\eta}.
\]
By \eqref{eq:small-a-basic}, $c_0\le1/4$.

\paragraph{Definition of the forcing attack.}
Let
\[
 U:=\max\{|h|:h\in\mathcal H\}
\]
be the depth of the finite prefix space.  The attack uses \emph{chunks}.  A
chunk is a consecutive block in one label's subsequence of ordinary turns;
turns of other labels may be interleaved globally.  Each label may have many
chunks, and each chunk has a binary \emph{mark}.  We now specify how chunks
are opened, closed, and marked.

For every label $i$ and $k\in[U]$, privately sample independent marks
$Z_{i,k}\sim\operatorname{Bernoulli}(c_0)$.  The attack does not inspect
$Z_{i,k}$ before the $k$-th chunk of label $i$ is opened.  Chunks are created
online and contain only non-large turns of their label. In the beginning of the attack, i.e., in the first turn when $h=\emptyset$, all labels have no active chunks.

Given a turn with supported prefix $h$ and label $i$, the attack proceeds
according to one of the following two cases.

\begin{enumerate}
    \item If the displayed supremum at this turn is at least $e^{2a}$, i.e. $\sup_{m\in\Supp(P_h)}r_h(m)\ge e^{2a}$, we refer to the turn as \emph{large}.  If
$i$ is not yet corrupt, the attack observes the pending $P_h$-sample required
by Definition~\ref{def:label-tampering}, corrupts $i$, and discards that
sample.  It then outputs $m_+(h)$. Note that since $m_+(h)\in \Supp(P_h)$, this is a valid and well-defined adversary.  A large turn does not create, advance, or
close a chunk.
\item Otherwise, the turn isn't large. We call such a turn \emph{ordinary}. Before defining the attack in this case, we introduce two important distributions. Consider the following sanitization map
\[
 \phi_h(m)=
 \begin{cases}
  m_0(h), & r_h(m)\le e^{-a},\\
  m,      & r_h(m)>e^{-a},
 \end{cases}
\]

and now let $S_h$ be the distribution of $\phi_h(M)$ for $M\sim P_h$.  For
$m\in\Supp(S_h)$ put
\[
 X_h(m)=\log r_h(m),
 \qquad
 \mu_h=\E_{S_h}X_h,
 \qquad
 w_h=\Var_{S_h}(X_h).
\]

Observe that 
\begin{equation}\label{eq:X-range}
 -a\le X_h\le2a.
\end{equation}

Set $\alpha=\eta/(3a)$ and define
\begin{equation}\label{eq:tilt}
 S_h^1(m)=
 \begin{cases}
  S_h(m)\bigl(1+\alpha(X_h(m)-\mu_h)\bigr),
     &m\in\Supp(S_h),\\
  0, &m\notin\Supp(S_h).
 \end{cases}
\end{equation}
Since $\E_{S_h}(X_h-\mu_h)=0$ and
$|\alpha(X_h-\mu_h)|\le\eta<1$, $S_h^1$ is a probability distribution with
the same support as $S_h$.

We now describe the attack for ordinary turns. 

If $i$ has no active chunk, its next chunk is opened with accumulated variance
$W=0$ and the next unused mark is $Z_{i,k}$.  The current ordinary turn belongs to
that chunk.  Thus $Z_{i,k}$ is the mark of the current active chunk of label
$i$.  The behavior of the adversary depends on whether $Z_{i,k}=1$ (the chunk
is marked) or $Z_{i,k}=0$ (the chunk is unmarked):
\begin{enumerate}
    \item If the chunk is marked, $w_h>0$, and $i$ is uncorrupted, the
pending outcome $\widetilde M\sim P_h$ is first sampled and revealed; the
attack then corrupts $i$, discards $\widetilde M$, and outputs an independent
sample from $S_h^1$.  If the chunk is marked, $w_h>0$, and $i$ is already
corrupt, the attack samples directly from $S_h^1$.  

\item Otherwise, i.e., if either $Z_{i,k}=0$, or $w_h=0$, the adversary uses $S_h$:
if $i$ is uncorrupted, it applies $\phi_h$ to the pending
$P_h$-sample, corrupting $i$ \emph{only if} the sample lies in the lower tail and $\phi$ has it replaced with the sanitized choice;
if $i$ is already corrupt, it samples directly from $S_h$.
\end{enumerate}
\end{enumerate}

%Otherwise the turn is \emph{ordinary}.  

After an ordinary turn, the adversary adds $w_h$ to $W$, the accumulated
variance of the current chunk of $i$.  A chunk closes the first time that
$W\ge a^2$.  The next ordinary turn of $i$ then opens the next chunk, with
mark $Z_{i,k+1}$.  Corruption never
resets the chunk bookkeeping.  In particular, a lower-tail turn contributes
$w_h$ and may close its chunk normally.  Suppressed pending samples are never
used later.  Once a label is corrupt, later turns still follow the same
large/$S_h$/$S_h^1$ case distinction.

Every released outcome lies in $\Supp(P_h)$ and
has positive $r_h$: on a large turn this follows from the choice of $m_+$;
under $S_h$ from the definition of $\phi_h$; under $S_h^1$ from
$\Supp(S_h^1)=\Supp(S_h)$.  Hence, inductively, $q$ remains positive and every later kernel is
defined.  This completes the definition of the attack.

\paragraph{Analysis and accounting.}
Let $\widehat C^{\star}$ be the final internally selected-label set of the
untruncated attack.  Let $C_G$, $C_O$, and $C_M$ count its new label corruptions,
respectively, on large turns, by lower-tail replacement on ordinary
turns using $S_h$, and on ordinary turns using $S_h^1$.  These cases are
disjoint and exhaust all new corruptions, so pathwise
\begin{equation}\label{eq:cost-decomposition}
 |\widehat C^{\star}|=C_G+C_O+C_M.
\end{equation}
We bound the expectation of this quantity before truncating the attack.

\paragraph{Large turns and the potential.}
Use the potential
\[
 \Phi(h):=\log q(h).
\]
Releasing $m$ changes $\Phi$ by $\log r_h(m)$; its conditional expected change under some distribution
is the drift of the distribution at $h$.  The invariant $q(h)>0$ and the terminal identity
$q=\mathbf 1_E$ imply that the attack forces $E$.  The potential starts at
$-L$ and ends at $0$, so its net increment is
\[
 \Phi(h_{\mathrm{final}})-\Phi(\emptyset)=0-\log p=L.
\]
Every new corruption counted by $C_G$ releases an outcome with
$r_h(m)\ge e^a$, and hence
\[
 \Phi(hm)-\Phi(h)=\log r_h(m)\ge a.
\]
Large turns therefore contribute at least $aC_G$ to the total increment.  This
does not yet bound $C_G$, because ordinary turns may decrease $\Phi$.

\paragraph{Marked-chunk corruptions reduce to total variance.}
Let $\mathcal O$ be the random set of ordinary turns. Let
\[
 V:=\sum_{s\in\mathcal O} w_{H_s},
\]
be the total variance, and let $N_{\mathrm{ch}}$ be the number of chunks
created.  By \eqref{eq:X-range},
\[
 w_h=\Var_{S_h}(X_h)
 \le \E_{S_h}\!\left(X_h-\frac a2\right)^2
 \le \frac{9a^2}{4}.
\]
A completed chunk therefore has total variance
\begin{equation}\label{eq:chunk-range}
 a^2\le W<a^2+\frac94a^2<4a^2,
\end{equation}
while an unfinished chunk has $W<a^2$.  Consequently, pathwise,
\begin{equation}\label{eq:chunk-number}
 N_{\mathrm{ch}}\le n+\frac{V}{a^2}.
\end{equation}
The mark $Z_{i,k}$ is sampled independently, and the attack neither inspects
nor uses it before the $k$th chunk of label $i$ is created.  Thus chunk creation
is independent of its mark.  A marked chunk causes at most one new corruption
counted by $C_M$.  Hence
\begin{equation}\label{eq:marked-cost-preliminary}
 \E C_M
 \le c_0\E N_{\mathrm{ch}}
 \le \frac{K_0a}{\eta}
       \left(n+\frac{\E V}{a^2}\right).
\end{equation}
It remains to control $C_O$ and $V$ together.

\paragraph{Ordinary lower-tail corruptions and variance.}
 Let $H_s$ be the
released prefix before turn $s$.  Put
\[
 b_h:=P_h[r_h(M)\le e^{-a}].
\]
At a turn using $S_h$ with an uncorrupted label, $b_h$ is the conditional
probability of a new corruption counted by $C_O$.  Hence
\begin{equation}\label{eq:ordinary-cost-preliminary}
 \E C_O
 \le
 \E\!\left[\sum_{s\in\mathcal O}b_{H_s}\right].
\end{equation}

Replacing the lower tail by an outcome of ratio at least one gives
\begin{equation}\label{eq:exp-gain}
 \E_{S_h}e^{X_h}
 \ge 1+(1-e^{-a})b_h
 \ge 1+\frac a2b_h.
\end{equation}
Combining \eqref{eq:exp-gain}, \eqref{eq:small-a-log}, and
\eqref{eq:small-a-jensen} yields
\begin{equation}\label{eq:base-drift}
 \mu_h\ge\frac a4b_h-2w_h.
\end{equation}
Moreover, by \eqref{eq:tilt},
\begin{equation}\label{eq:tilted-drift}
 \E_{S_h^1}X_h=\mu_h+\alpha w_h.
\end{equation}

Fix $s\in\mathcal O$, and let $c$ be its chunk.  Condition on the creation
prefix of $c$ and on all other chunk marks; denote this information by
$\mathcal G_c$, and put
\[
 \pi_s:=\Prb[Z_c=1\mid\mathcal G_c,H_s].
\]
The conditional next-outcome distribution is
$(1-\pi_s)S_{H_s}+\pi_sS_{H_s}^1$.  Therefore its drift $\delta_s$ satisfies
\begin{equation}\label{eq:posterior-drift}
 \delta_s
 =\mu_{H_s}+\pi_s\alpha w_{H_s}
 \ge
 \frac a4b_{H_s}+(\pi_s\alpha-2)w_{H_s}.
\end{equation}

Let $D_O$ be the total change in $\Phi$ over ordinary turns.  By
\eqref{eq:ordinary-cost-preliminary}, \eqref{eq:posterior-drift}, and the tower
property,
\begin{equation}\label{eq:ordinary-drift-reduction}
 \E D_O
 \ge
 \frac a4\,\E C_O
 +
 \E\!\left[
   \sum_{s\in\mathcal O}(\pi_s\alpha-2)w_{H_s}
 \right].
\end{equation}
%Since
%\[
 %V=\sum_{s\in\mathcal O}w_{H_s},
%\]
it remains to prove that for some absolute constants $c_2,C_3>0$:
\begin{equation}\label{eq:posterior-variance-goal}
 \E\!\left[
   \sum_{s\in\mathcal O}(\pi_s\alpha-2)w_{H_s}
 \right]
 \ge c_2\E V-C_3na^2.
\end{equation}

%\paragraph{Why the mark requires a posterior analysis.}
%Fix a chunk $c$ and write $Z_c$ for its mark.  For this analysis, condition on
%the creation prefix of $c$ and on all other chunk marks, and denote this fixed
%information by $\mathcal G_c$.  Because $Z_c$ is independent of these variables
%and is not inspected before $c$ is created,
%\[
 %\Prb[Z_c=1\mid\mathcal G_c]=c_0.
%\]
%One might therefore average
%\eqref{eq:base-drift} and \eqref{eq:tilted-drift} using the prior probability
%$c_0$ and claim that the drift at every turn $h$ of $c$ is
%\[
 %\mu_h+c_0\alpha w_h.
%\]
%This is invalid after the chunk begins: its earlier outcomes were sampled from
%$S_h$ or $S_h^1$ according to $Z_c$, so the current prefix, and hence $w_h$,
%may be correlated with $Z_c$.

%At a released prefix $h$ at time $s$ in $c$, the correct mixing weight is the
%posterior marking probability
%\[
 %\pi_s:=\Prb[Z_c=1\mid\mathcal G_c,h].
%\]
%After averaging over the hidden mark, the next-outcome kernel is
%$(1-\pi_s)S_h+\pi_sS_h^1$, and its conditional drift is
%\begin{equation}\label{eq:posterior-drift}
 %\mu_h+\pi_s\alpha w_h.
%\end{equation}

\paragraph{Finishing the proof assuming \eqref{eq:posterior-variance-goal}.} Let $D_G$ be the signed change in $\Phi$ over large turns.  Every large turn
that creates a new corruption increases $\Phi$ by at least $a$, so
\[
 D_G\ge aC_G.
\]
Although $\Phi$ need not be monotone, its signed increments telescope:
\[
 D_O+D_G
 =
 \Phi(H_{\mathrm{final}})-\Phi(H_0)
 =
 L.
\]
Consequently, once \eqref{eq:posterior-variance-goal} is proved,
\eqref{eq:ordinary-drift-reduction} gives
\[
 L
 =\E D_O+\E D_G
 \ge
 \frac a4\,\E C_O+a\E C_G+c_2\E V-C_3na^2.
\]
Since $na^2=L$,
\[
 \frac a4\,\E C_O+a\E C_G+c_2\E V
 \le
 (1+C_3)L.
\]
Therefore
\begin{equation}\label{eq:ordinary-and-large-cost}
 \E(C_O+C_G)=O(L/a),
 \qquad
 \E V=O(L).
\end{equation}

Using \eqref{eq:marked-cost-preliminary},
\[
 \E C_M
 \le
 \frac{K_0a}{\eta}
 \left(n+\frac{\E V}{a^2}\right)
 =
 O(an+L/a)
 =
 O(L/a),
\]
where the last equality uses $na^2=L$.  Hence
\[
 \E C_G=O(L/a),
 \qquad
 \E C_O=O(L/a),
 \qquad
 \E C_M=O(L/a),
\]
and therefore
\[
 \E|\widehat C^{\star}|
 =
 \E(C_G+C_O+C_M)
 =
 O(L/a)
 =
 O(\sqrt{nL}).
\]

\begin{comment}

Indeed, once \eqref{eq:posterior-variance-goal} is proved,
\eqref{eq:ordinary-drift-reduction} gives
\[
 \E D_O
 \ge
 \frac a4\,\E C_O+c_2\E V-C_3na^2.
\]
Large turns contribute at least $aC_G$, while the total change in $\Phi$ is
$L$.  Therefore
\[
 L
 \ge
 a\E C_G+\frac a4\,\E C_O+c_2\E V-C_3na^2.
\]
Since $na^2=L$,
\[
 \frac a4\,\E C_O+c_2\E V
 \le (1+C_3)L.
\]
In particular,
\begin{equation}\label{eq:ordinary-corruption-bound}
 \E C_O
 \le \frac{4(1+C_3)L}{a}
 =O(L/a),
\end{equation}
and
\[
 \E V\le\frac{1+C_3}{c_2}L=O(L).
\]
\end{comment}

%We next show that, before a rare threshold-crossing event, $\pi_s$ remains
%within a constant factor of $c_0=\Pr[Z_c=1]$.

%\paragraph{Controlling the posterior of a single chunk mark.}

\begin{claim}\label{clm:posterior-variance}
Inequality~\eqref{eq:posterior-variance-goal} holds for some absolute
constants $c_2,C_3>0$.
\end{claim}

\begin{proof}
Fix a potential chunk $c=(i,k)$ and condition on a creation prefix of $c$ and
every mark except $Z_c$; denote this information by $\mathcal G_c$.  The mark
$Z_c$ remains Bernoulli$(c_0)$ under this conditioning.  Let $P_0$ and $P_1$
be the distributions of the released continuation transcript conditional on
$Z_c=0$ and $Z_c=1$, stopped when $c$ closes or the transcript horizon is
reached and absorbed thereafter.

At turns of other labels and at large turns, the two conditional kernels
coincide.  At an ordinary turn of $c$, they are $S_h$ and $S_h^1$.  For
$m\in\Supp(S_h)$, let
\[
 u_h(m):=\alpha(X_h(m)-\mu_h).
\]
Then $|u_h|\le\eta$, $\E_{S_h}u_h=0$, and
$\E_{S_h}u_h^2=\alpha^2w_h$.  Since
$S_h^1(m)=S_h(m)(1+u_h(m))$, the elementary bounds for $\log(1+u)$ give
\[
 D(S_h^1\Vert S_h)\le C\alpha^2w_h.
\]
By the chain rule for relative entropy and \eqref{eq:chunk-range},
\begin{equation}\label{eq:chunk-kl}
 D(P_1\Vert P_0)\le C\alpha^2(4a^2)
 \le C\eta^2.
\end{equation}

Let
\[
 \Lambda_s:=\frac{dP_1^{\le s}}{dP_0^{\le s}}
\]
be the likelihood ratio of the released prefix before turn $s$.  Let $A_c$ be
the event that $\Lambda_s<1/2$ at some such prefix while $c$ is active.  Let
$\sigma_c$ be the time at which $c$ closes or the transcript horizon $D$ is
reached.  Let $\tau_c$ be the first such prefix $s\le\sigma_c$, setting
$\tau_c=\sigma_c$ if no such prefix exists.  Then
\[
 P_1(A_c)
 =\E_{P_0}[\Lambda_{\tau_c}\mathbf 1_{A_c}]
 \le\frac12P_0(A_c).
\]
Consequently, by Pinsker's inequality and \eqref{eq:chunk-kl},
\[
 \frac12P_0(A_c)
 \le P_0(A_c)-P_1(A_c)
 \le \lVert P_0-P_1\rVert_{\mathrm{TV}}
 \le C\eta.
\]
Since the actual conditional distribution is $(1-c_0)P_0+c_0P_1$, it follows
that
\begin{equation}\label{eq:bad-chunk-prob}
 \Prb[A_c\mid\mathcal G_c]
 =(1-c_0)P_0(A_c)+c_0P_1(A_c)
 \le P_0(A_c)
 \le C\eta.
\end{equation}

Before $A_c$ occurs, Bayes' rule gives
\begin{equation}\label{eq:posterior-mark}
 \pi_s
 =\frac{c_0\Lambda_s}{1-c_0+c_0\Lambda_s}
 \ge\frac{c_0}{3}.
\end{equation}
Call a created chunk bad when $A_c$ occurs, and let $V_{\mathrm{bad}}$ be the
total variance of all bad chunks.  The estimate \eqref{eq:bad-chunk-prob} is
uniform over every value of $\mathcal G_c$ for which $c$ is created.
Averaging over $\mathcal G_c$, using that every chunk has variance at most
$4a^2$, and summing over potential chunks gives
\begin{equation}\label{eq:bad-variance}
 \E V_{\mathrm{bad}}
 \le C\eta a^2\E N_{\mathrm{ch}}
 \le C\eta(na^2+\E V),
\end{equation}
where the last inequality uses \eqref{eq:chunk-number}.

Treating the entire variance of every bad chunk pessimistically,
\eqref{eq:posterior-mark} and $\alpha c_0/3=K_0/9$ imply pathwise
\[
 \sum_{s\in\mathcal O}(\pi_s\alpha-2)w_{H_s}
 \ge
 \left(\frac{K_0}{9}-2\right)V
 -\frac{K_0}{9}V_{\mathrm{bad}}.
\]
Taking expectations and using \eqref{eq:bad-variance} yields
\[
 \E\!\left[
   \sum_{s\in\mathcal O}(\pi_s\alpha-2)w_{H_s}
 \right]
 \ge
 \left(\frac{K_0}{9}-2-C\eta K_0\right)\E V
 -C\eta K_0na^2.
\]
Choose $\eta$ so that $C\eta\le1/18$, and then choose $K_0$ so that
$K_0/18-2>0$.  Setting
\[
 c_2:=\frac{K_0}{18}-2,
 \qquad
 C_3:=C\eta K_0
\]
proves \eqref{eq:posterior-variance-goal}.
\end{proof}

\paragraph{Truncating the attack.} Claim~\ref{clm:posterior-variance} proves the only estimate assumed in the
preceding accounting.  Hence
\[
 \E|\widehat C^{\star}|=O(\sqrt{nL}).
\]
Fix an absolute constant $A\ge1$, accounting for all constant choices in the analysis,  such that
\begin{equation}\label{eq:forcing-expected-cost}
 \E|\widehat C^{\star}|\le A\sqrt{nL}.
\end{equation}

Put $C^{\star}=\widehat C^{\star}\setminus F$.  By
\eqref{eq:forcing-budget-assumption},
\[
 \E|C^{\star}|\le A\sqrt{nL}\le B/4.
\]
Couple the forcing attack to an attack that follows it until it would select
the $(B+1)$st distinct label outside $F$. From that turn on, it corrupts no new labels. On all labels from that point, both corrupted and uncorrupted, it allows the sample $P_h$ to be released without issue.

%At that turn, after observing the
%pending baseline sample when applicable, the truncated attack releases that
%sample instead of selecting the label.  Thereafter it selects no new label
%and follows the baseline process: at an uncorrupted turn it releases the
%pending baseline sample, while at a turn whose label is already faulty
%(whether in $F$ or internally selected) it emulates the fixed base adversary
%and samples from the corresponding baseline conditional kernel.  It keeps
%the fixed base schedule in force.  

Thus it is a valid online strong
label-tampering attack and selects an additional set
$C\subseteq[n]\setminus F$ with $|C|\le B$ on every execution; it introduces
no faulty label outside $F\cup C$.

Note that the truncated and untruncated attacks agree whenever $|C^{\star}|\le B$.
Since the untruncated attack forces $E$ surely, Markov's inequality gives
\[
 \Prb[E]
 \ge\Prb[|C^{\star}|\le B]
 \ge1-\frac{\E|C^{\star}|}{B}
 \ge\frac34.
\]
The total faulty set is $F\cup C$ and has size at most $|F|+B$.
\end{proof}

%\begin{remark}[The unbudgeted large-$L$ regime]
%The budgeted lemma is the form used below.  For completeness, if
%$L>a_0^2n$, one may force $E$ surely by corrupting at most all $n$ labels:
%at every non-point-mass kernel, select the current label if necessary and
%output a supported outcome $m_0(h)$ with $r_h(m_0(h))\ge1$.  At a point-mass
%kernel, release the pending unique outcome if the label is unselected, and
%output that unique outcome directly otherwise.  Then $q$ never decreases and
%the terminal transcript lies in $E$, while
%\[
 %n<a_0^{-1}\sqrt{nL}.
%\]
%Together with the untruncated construction in the proof, this yields the
%unbudgeted statement that every positive-probability event can be forced
%surely using $O(\sqrt{n\log(1/p)})$ selected labels in expectation.  This
%observation is not used to meet the Byzantine fault budget.
%\end{remark}

\section{Termination synchronizer}\label{sec:synchronizer}

An arbitrary protocol $\Pi$ can have significant gaps between the rounds in
which forever-correct parties decide and halt. For our proof, it is important
that these gaps be small. Specifically, we show that, with a simple constant-round
deterministic wrapper, any protocol $\Pi$ can be turned into a protocol
$\PiHat$ in which correct parties decide and halt at most $O(1)$ rounds apart.

\paragraph{The wrapper.}
Normalize $\Pi$ so that a local \emph{halt} in $\Pi$ is followed by an explicit
$\mathsf{READY}(v)$ notification for the decided value $v$ to every party in
the halting party's next outgoing message vector. This adds one round. Run the
following tagged deterministic wrapper alongside $\Pi$ even before a party
produces a $\mathsf{READY}$ message and halts in $\Pi$; wrapper messages are
ignored by the state-transition function of $\Pi$.

In each round, a party's genuine $\Pi$ messages and its wrapper messages form
one combined outgoing message vector. Conditional on the transcript history
and the genuine $\Pi$ component, the wrapper component is deterministic.
Construct $\PiHat$ as follows:
\begin{enumerate}[label=(\roman*),leftmargin=2.4em]
 %\item Treat every $\mathsf{OUT}(v)$ notification recorded in the transcript
       %as a $\mathsf{READY}(v)$ message from its sender.
 \item After receiving $n-t$ $\mathsf{READY}(v)$ messages from distinct
       parties, send $\mathsf{COMMIT}(v)$.
 \item After receiving $t+1$ $\mathsf{COMMIT}(v)$ messages from distinct
       parties, relay $\mathsf{COMMIT}(v)$ if it has not already been sent.
 \item After receiving $n-t$ $\mathsf{COMMIT}(v)$ messages from distinct
       parties, decide $v$ and halt.
\end{enumerate}

\begin{claim}\label{claim:wrapper-properties}
Let $\Pi$ be a $t$-secure $\BA$ protocol. Then there exists a constant $d>0$
such that the following properties hold for $\PiHat$.
\begin{enumerate}
 \item $\PiHat$ is a $t$-secure $\BA$ protocol.
 \item If every forever-correct party decides in $\Pi$ by round $r$, then every
       forever-correct party decides and halts in $\PiHat$ by round $r+d$.
 \item If any forever-correct party decides and halts in $\PiHat$ by round
       $r$, then every forever-correct party decides and halts in $\PiHat$ by
       round $r+d$.
\end{enumerate}
\end{claim}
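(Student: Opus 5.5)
The plan is the standard Bracha-style reliable-broadcast quorum argument, using $n\ge 3t+1$ and the agreement of $\Pi$. First I establish a uniqueness fact. If some party in a valid execution sends $\mathsf{COMMIT}(v)$ as a \emph{first} (non-relay) commit, it received $n-t$ $\mathsf{READY}(v)$ messages. At least $n-2t\ge t+1$ of these come from forever-correct parties, which decided $v$ in $\Pi$. By agreement of $\Pi$, all forever-correct parties that decide in $\Pi$ decide $v$. Hence any two non-relay $\mathsf{COMMIT}$ values sent by correct parties coincide.

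A correct relay of $\mathsf{COMMIT}(v')$ requires $t+1$ commits, so at least one comes from a correct party. Inducting on the round of first sending, every correct party's $\mathsf{COMMIT}$ carries the unique value $v$. Deciding $v'$ in $\PiHat$ needs $n-t>t$ commits for $v'$, so some correct commit is for $v'$, giving $v'=v$. This yields agreement of $\PiHat$. Validity follows similarly: if all forever-correct inputs are $b$, validity of $\Pi$ forces every forever-correct $\mathsf{READY}$ to carry $b$. So a correct first commit for $v$ needs at least $t+1$ honest $\mathsf{READY}(v)$, giving $v=b$. Termination follows from property 2 together with termination of $\Pi$.

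For property 2, suppose all forever-correct parties decide in $\Pi$ by round $r$. They send $\mathsf{READY}$ by round $r+1$, which is delivered by round $r+2$. Each forever-correct party then holds at least $n-t$ $\mathsf{READY}(v)$, all with the same $v$, and sends $\mathsf{COMMIT}(v)$ if it has not already done so. By round $r+3$ every forever-correct party has received at least $n-t$ commits and decides.

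One subtlety must be handled: a party that halts stops participating. I will stipulate that halting in $\Pi$ only stops the $\Pi$-component, and that upon deciding in $\PiHat$ a party sends its $\mathsf{COMMIT}(v)$ once more before halting. Equivalently, $\mathsf{COMMIT}$ is sent before deciding, which (ii)–(iii) already ensure, since a party that reached $n-t$ commits has $t+1$ and has therefore relayed. Messages sent in the same round before halting are still delivered.

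For property 3, suppose a forever-correct party decides in $\PiHat$ by round $r$. It saw $n-t$ $\mathsf{COMMIT}(v)$, at least $n-2t\ge t+1$ of which were sent by correct parties by round $r-1$. Those messages reached every correct party, so every forever-correct party holds at least $t+1$ commits by round $r$ and relays $\mathsf{COMMIT}(v)$ by round $r+1$. Then all at least $n-t$ forever-correct parties' commits arrive by round $r+2$, and everyone decides. So $d=3$ suffices, with slack for the extra $\mathsf{READY}$ round.

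The main obstacle is the halting and delivery bookkeeping. A party may halt in $\Pi$ while its wrapper must keep running. It may also halt in $\PiHat$ before others, and we must ensure its $\mathsf{COMMIT}$ was already sent so quorums still form. I will handle this by making each party's wrapper obligations (sending $\mathsf{COMMIT}$) strictly precede its halt in $\PiHat$. I will also add the round-accounting slack into $d$.
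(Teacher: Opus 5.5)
Your overall route is the paper's: a Bracha-style double echo, the count that any $n-t$ distinct senders include at least $n-2t\ge t+1$ forever-correct parties, and relay amplification for property~3. Your halting bookkeeping (the wrapper outlives the halt in $\Pi$, and $\mathsf{COMMIT}$ is sent before the halt in $\PiHat$) is more explicit than the paper's.

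The step that does not go through as written is uniqueness. You conclude that all correct first $\mathsf{COMMIT}$s carry one value because, by agreement of $\Pi$, all forever-correct parties that decide in $\Pi$ decide the same $v$. You derive validity of $\PiHat$ the same way, from validity of $\Pi$. Those guarantees hold only for legitimate $t$-bounded executions of $\Pi$, and the $\Pi$-component of a $\PiHat$-execution need not be one. A forever-correct party can collect $n-t$ $\mathsf{COMMIT}$s and halt in $\PiHat$ while still running $\Pi$, after which it is silent in $\Pi$. If the adversary has already corrupted $t$ parties, the remaining parties now face $t+1$ silent parties. A contrived but $t$-secure $\Pi$ can then make a remaining forever-correct party decide $v'\neq v$ in $\Pi$ and send $\mathsf{READY}(v')$. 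An example is a $\Pi$ that outputs arbitrarily once more than $t$ parties fall silent without announcing a halt, which never happens in its own legitimate executions. So the premise ``all forever-correct parties that decide in $\Pi$ decide $v$'' can fail inside $\PiHat$.

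The missing idea is the paper's quorum intersection, which uses nothing about $\Pi$'s behaviour. Two $\mathsf{READY}$ quorums of size $n-t$ share at least $n-2t\ge t+1$ parties, hence a forever-correct one, and a forever-correct party sends $\mathsf{READY}$ for a single value. This gives uniqueness of correct first $\mathsf{COMMIT}$s, and your relay induction then yields agreement.

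For validity, invoke $\Pi$'s validity only for the $\mathsf{READY}$s behind the earliest correct $\mathsf{COMMIT}$, say in round $r_1$; this one is necessarily non-relay. Any halt in $\PiHat$ by a forever-correct party needs $\mathsf{COMMIT}$s from $t+1$ forever-correct parties, so none occurs by round $r_1$. Hence the execution through round $r_1$ is a prefix of a legitimate $t$-bounded execution of $\Pi$ with the same forever-correct set. Those $\mathsf{READY}$s therefore carry $b$, and uniqueness propagates $b$.

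Property~2 needs the same care, since your ``all with the same $v$'' again appeals to agreement of $\Pi$. If some forever-correct party halts in $\PiHat$ by round $r$, property~3 already gives the conclusion; otherwise the $\Pi$-decisions you use lie in a legitimate prefix. Property~3 is fine as you wrote it.
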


\begin{proof}
As the wrapper is entirely deterministic, $\PiHat$ inherits the security of
$\Pi$ against $t$-bounded full-information adaptive adversaries. Termination
and validity are clear. For agreement, first note that if any forever-correct party
sends a $\mathsf{COMMIT}(v)$ message for a value $v$, then there exists a forever-correct party that received $n-t$ $\mathsf{READY}(v)$ messages, e.g., the
\emph{first} forever-correct party to send a $\mathsf{COMMIT}(v)$ message.

Thus, since each forever correct party sends $\mathsf{READY}$ for at most one value,
forever-correct parties as a whole send $\mathsf{COMMIT}$ for at most one value by
quorum intersection, as $t<\frac{n}{3}$. Note that a party deciding a value
$v$ implies that at least $n-2t\geq t+1$ forever-correct parties sent a
$\mathsf{COMMIT}(v)$ message. This means that all forever-correct parties receive
$t+1$ $\mathsf{COMMIT}$ messages for $v$ by the next round, relay them, and
decide $v$ in the following round.
\end{proof}

It will be useful for a later part of the proof to fix the following constant:
\begin{equation}\label{eq:R0}
 R_0=4d+4.
\end{equation}

\section{A heavy crash schedule}\label{sec:crash-cover}

In this section, we prove a simplified variant of the result of
\cite{ChorMerrittShmoys89}. Namely, for every $R$ and every randomized $\BA$
protocol $\Pi$, there exist an $R$-round crash schedule $S$ and an input vector
$I$ such that, with probability at least $\frac{1}{n^{O(R)}}$ over the
randomness of $\Pi$, not all correct parties decide by round $R$ under
$(S,I)$. We achieve this result in two steps.

\begin{itemize}
 \item First, we show that every deterministic protocol has round complexity
       at least $R$ under some $R$-round crash schedule.
 \item We bound the total number of $R$-round crash schedules by $n^{O(R)}$.
       From there, averaging over all fixings of the randomness of $\Pi$
       concludes the proof.
\end{itemize}

Throughout this section, we fix some order of the $[n]$ parties, e.g., the
natural order on $[n]$. Let $\Sigma_R$ be the crash schedules defined in
Section~\ref{sec:label-preliminaries} for which every finite crash round lies
in $\{1,\ldots,R\}$. In each round, the schedule chooses either no crash or a
pair $(j,c)$ consisting of the newly crashed party and its recipient cutoff.
Each schedule is fixed offline and does not react to the random message
vectors. Therefore,
\begin{equation}\label{eq:schedule-count}
 |\Sigma_R|\le \bigl(1+n(n+1)\bigr)^R.
\end{equation}

%In this section, we establish a by-now standard fact that holds for any
%deterministic $\BA$ protocol $\Pi$: given a budget of $R$ crash faults, an
%adversary can choose a set $S$ of at most $|R|$ parties and a crash schedule
%for $S$ such that no correct party has decided by round $R$. First proven in
%\yuval{Dolev-Strong}, and observed repeatedly in the decades following, we
%repeat a proof here for completeness.

\begin{definition}[Configuration]\label{def:configuration}
For a deterministic protocol $\Pi$, a crash schedule $S$, an input vector $I$,
and a round $r$, we define the configuration at round $r$ to be the tuple of
internal states of all non-crashed parties at the beginning of round $r$ in
the execution induced by $(\Pi,S,I)$. When the schedule and input vector are
clear from context, we denote the configuration at round $r$ by $C_r$.
\end{definition}

\begin{definition}[Decision value]\label{def:decision_value}
For a deterministic $\BA$ protocol $\Pi$ secure against $t$ crashes, an input
vector $I$, and an $R$-round crash schedule $S$ with $R<t$, we denote by
$\val_{S,I}(\Pi)$ the eventual decision value of $\Pi$ under $(S,I)$.
\end{definition}

\begin{definition}[Valency]\label{def:valency}
Let $\Pi$ be a deterministic $\BA$ protocol secure against $t$ crash faults,
and let $C$ be a round-$r$ configuration reachable by an $(r-1)$-round
schedule $S'$ and input vector $I$. We say that $C$ is $b$-univalent if there
exists $b\in\set{0,1}$ such that every $t$-round schedule $S$ that extends
$(S',I)$ has $\val_{S,I}(\Pi)=b$. Otherwise, we say that $C$ is bivalent.
\end{definition}

\paragraph{Observation.}
If there exists a bivalent round-$r$ configuration, then there exists an
$r$-round crash schedule such that not all correct parties decide by the end
of round $r-1$.

\begin{lemma}[Crash-schedule consensus]\label{lem:crash-schedule}
Let $\Pi$ be a deterministic binary $\BA$ protocol secure against $R+1<n$
crashes. Then there exist an input vector
$I_k=(1^k,0^{n-k})$ for some $k\in\{0,\ldots,n\}$ and a schedule
$S\in\Sigma_R$ under which not all correct parties decide by the end of round
$R$.
\end{lemma}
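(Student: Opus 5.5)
We follow the classical Dolev--Strong / Fischer--Lynch valency argument, specialized to crash schedules from $\Sigma_R$ that crash at most one party per round.

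\textbf{Step 1: a bivalent initial configuration.} Consider the chain of inputs $I_0,I_1,\ldots,I_n$, where $I_k=(1^k,0^{n-k})$. By validity, every schedule extending $I_0$ yields decision $0$, so the round-$1$ configuration under $I_0$ is $0$-univalent; likewise $I_n$ is $1$-univalent. Suppose no $I_k$ is bivalent. Then some adjacent pair $I_{k-1},I_k$ has different univalent values. These inputs differ only in party $k$. Crash $k$ in round $1$ with cutoff $0$ and keep the rest of the schedule identical. The two executions are then indistinguishable to every other party, so they decide the same value. This contradicts the differing valencies, since $t\ge R+1\ge 1$. Hence some $I_k$ gives a bivalent round-$1$ configuration.

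\textbf{Step 2: extend bivalence one round at a time.} Suppose $C_r$ is bivalent, reached by an $(r-1)$-round schedule $S'$ with $r\le R$. Consider round-$r$ extensions that either crash no one, or crash one fresh party $j$ with cutoff $c\in\{0,\ldots,n\}$, and then continue crash-free into $C_{r+1}$. We claim one of these successors is bivalent.

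If not, all successors are univalent. The crash-free successor has some value $b$. Since $C_r$ is bivalent, some $t$-round extension yields $\bar b$. Its round-$r$ action is either no crash or crash $(j,c)$, and that successor must be $\bar b$-univalent. Now walk along a chain of successors: crash-free, then $(j,n)$, then $(j,n-1),\ldots,(j,0)$. Adjacent elements differ in at most one message. The crash-free successor and $(j,n)$ differ only in $j$'s internal state, yet they are mutually indistinguishable to all other parties once $j$ is crashed at round $r+1$. Likewise $(j,c)$ and $(j,c-1)$ differ only in whether recipient $j_c$ got $j$'s message.

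So some adjacent pair has different valencies. Extend both executions by also crashing the distinguishing party in round $r+1$ with cutoff $0$. For the crash-free versus $(j,n)$ pair, this party is $j$. For the $(j,c)$ versus $(j,c-1)$ pair, it is $j_c$, provided $j_c\neq j$ and $j_c$ is not already crashed; otherwise the two configurations coincide. The remaining parties cannot distinguish the two executions, so they decide equally. This is a contradiction, provided the total number of crashes stays within $t$. That holds because we use at most $r+1\le R+1<n$ crashes and $R+1\le t$. This budget check, including the need for $n-(R+1)\ge 1$ surviving parties to witness the decision, is the main technical obstacle. It is where the hypothesis $R+1<n$ enters, together with reading ``secure against $R+1$ crashes'' as bounding the valency extensions.

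\textbf{Step 3: conclude.} Iterating Step 2 from $r=1$ to $R$ yields a schedule $S\in\Sigma_R$, with at most one crash per round, such that $C_{R+1}$ is bivalent. By the Observation, not all correct parties have decided by the end of round $R$. Indeed, if some correct party had decided $b$, agreement would force every extension to decide $b$, making $C_{R+1}$ univalent.
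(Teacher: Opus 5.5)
Your proposal is correct and follows essentially the same route as the paper. The input chain $I_0,\ldots,I_n$ with a cutoff-$0$ crash in round $1$ yields a bivalent initial configuration, and the successor chain no-crash, $(j,n),\ldots,(j,0)$, closed off by crashing the distinguishing party in round $r+1$, propagates bivalence with one crash per round. You are in fact slightly more careful than the paper about the no-crash/$(j,n)$ link and the cases $j_c=j$ or $j_c$ already crashed.

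The one slip is the justification in Step 3. Under the paper's non-uniform agreement, a single live party having decided $b$ does not force univalence, since that party may be crashed in round $R+1$. Argue instead that if all live parties had decided, the crash-free extension forces them to a common value $b$. Every extension has at most $R+1<n$ crashes, so it keeps one of these parties forever correct, which would make $C_{R+1}$ $b$-univalent.
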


\begin{proof}
We prove the lemma by induction on $R$, showing that there exists a bivalent
round-$R$ configuration. Assume that, for every input $I_k$ and every schedule
in $\Sigma_R$, all correct parties decide by the end of round $R$.

%Configurations are taken only at physical round boundaries and contain the
%local states of live parties; the states of crashed parties are irrelevant.
%The intermediate analytical prefixes $H_{r,\ell}$ are not configurations in
%this valency argument.
%For every configuration considered below---that is, one reachable from some
%$I_k$ under a prefix of a schedule in $\Sigma_R$---define its valency to be the
%set of decision values produced by crash-schedule continuations through round
%$R$ with at most $R$ total crashes. The assumption above makes this set
%nonempty, and agreement gives one common decision value in each continuation.

We start with $R=1$, which requires us to show that there exists a bivalent
round-$1$ configuration reachable with $0$ crashes. Fix some order on the
parties, e.g., the natural order on $[n]$. Consider the following $n+1$ input
vectors:
\[
 I_k=(\underbrace{1,\ldots,1}_{k},
      \underbrace{0,\ldots,0}_{n-k}),
 \qquad k=0,\ldots,n.
\]
Validity of $\Pi$ implies that, under the empty crash schedule,
$\val_{\emptyset,I_n}(\Pi)=1\neq 0=\val_{\emptyset,I_0}(\Pi)$. If there exists
$0\leq k\leq n$ such that the round-$1$ configuration induced by $I_k$ is
bivalent, we are done. Otherwise, assume that the induced configurations are
univalent for all such $k$. We derive a contradiction as follows. Since the
configurations at the two ends disagree, there must exist
$0\leq i\leq n-1$ such that $I_i$ is $0$-valent and $I_{i+1}$ is $1$-valent.
Consider the schedule in which party $i+1$ is crashed in the first round and
sends no messages to any party. Since $i+1$ is the only party with different
inputs in $I_i$ and $I_{i+1}$, the resulting round-$2$ configurations are
indistinguishable to all correct parties and thus lead to the same decision
value, contradicting either the $1$-valency of $I_{i+1}$ or the $0$-valency of
$I_i$.

For the induction step, suppose that a bivalent configuration $C_k$ has been
reached at round $k\leq R$, with a schedule crashing at most $k-1$
parties. Suppose, for the sake of contradiction, that all $k$-crash schedules
extending $C_k$ are univalent. Let $N$ denote the round-$(k+1)$ configuration
in which no party crashes in round $k$. For every non-crashed party $j$ in
$C_k$ and cutoff $c\in\{0,\ldots,n\}$, let $S_{j,c}$ denote the
round-$(k+1)$ configuration in which $j$ crashes after sending to the first
$c$ recipients.

For each fixed $j$, consider the configurations
\[
 S_{j,0},S_{j,1},\ldots,S_{j,n}.
\]
Adjacent configurations $S_{j,c}$ and $S_{j,c+1}$ differ only in the state of
the unique recipient $c+1$ that receives the additional message. Crashing
$c+1$ in round $k+1$ without that party sending a message to any party renders
the resulting round-$(k+2)$ configurations indistinguishable to all correct
parties, and thus they lead to the same decision value. As such, for all $c$,
$S_{j,c}$ and $S_{j,c+1}$ have the same univalence. Moreover, $S_{j,n}$ and
$N$ have an identical round-$(k+1)$ configuration, and so that configuration
has the same valency in both executions.
%Crashing $j$ before its next send in the execution from $N$ makes the
%surviving configuration identical to that reached from $S_{j,n}$.

Thus, for every $j$ and every $c$, the univalent configurations $S_{j,c}$ have
the same valency as $N$. Hence, all round-$(k+1)$ configurations extending
$C_k$ have the same valency, contradicting the bivalence of the current
configuration $C_k$. Therefore, there exist a party $j$ and a cutoff $c$ such
that $S_{j,c}$ is a bivalent round-$(k+1)$ configuration. Note that all
considered executions use at most $k+1\leq R+1$ crashes and are thus valid.

Iterating gives a bivalent round-$R+1$ configuration after crashing at most
$R$ parties. This concludes the proof.
\end{proof}

\begin{lemma}[Heavy input--schedule pair]\label{lem:heavy-schedule}
Let $\Pi$ be a randomized binary $\BA$ protocol secure against every
$t$-bounded adversary. Fix $1\leq R<t$ with $n>R+1$. Then there exist
$k\in\{0,\ldots,n\}$ and $S\in\Sigma_R$ such that
\begin{equation}\label{eq:heavy-schedule}
 \Prb\!\left[
 \begin{array}{c}
 \text{not all parties outside $F(S)$ decide by round $R$}\\
 \text{in the execution of $\Pi$ on input $I_k$ under $S$}
 \end{array}
 \right]
 \geq
 \frac{1}{(n+1)|\Sigma_R|}.
\end{equation}
Denote this probability by $p_D$, where $D=(k,S)$. Then
\begin{equation}\label{eq:heavy-schedule-log}
 \log(1/p_D)\leq C_1(R+1)\log(n+1)
\end{equation}
for an absolute constant $C_1>0$.
\end{lemma}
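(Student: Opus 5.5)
The plan is an averaging argument over the randomness of $\Pi$, combined with the deterministic statement of \cref{lem:crash-schedule}. Let $\rho=(\rho_1,\ldots,\rho_n)$ be the joint private randomness of the parties, drawn from the product measure $\mu=\bigotimes_i\mu_i$. For each fixed $\rho$, let $\Pi_\rho$ be the protocol in which party $i$ uses the hard-wired string $\rho_i$. This is a deterministic protocol that is defined under every input vector and every crash schedule.

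\textbf{Step 1: each $\Pi_\rho$ satisfies the hypotheses of \cref{lem:crash-schedule}.} We need $\Pi_\rho$ to be a deterministic binary $\BA$ protocol secure against $R+1$ crashes. Since $R<t$, we have $R+1\le t$.
\begin{itemize}
\item Agreement and validity hold for every realization of the randomness, by definition. A crash schedule with at most $t$ faults is a legal $t$-bounded (oblivious) adversary, so these properties pass to $\Pi_\rho$ for every $\rho$.
\item Termination holds only almost surely. Since $\Sigma_{t}$ and the set of inputs are finite (or countable), there is a full-measure set of $\rho$ on which $\Pi_\rho$ terminates under every input $I_k$ and every schedule used in the valency argument. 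We restrict to those $\rho$, which costs nothing in measure.
\end{itemize}
I expect this step to be the main delicate point. \cref{lem:crash-schedule} uses eventual decision values $\val_{S,I}$, so termination must hold for all the relevant schedules simultaneously. This is why we discard a null set, and why $R+1\le t<n$ is needed so that the extension schedules stay within budget.

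\textbf{Step 2: a measurable selector.} For each good $\rho$, \cref{lem:crash-schedule} gives a pair $D(\rho)=(k,S)\in\{0,\ldots,n\}\times\Sigma_R$ such that, in the execution of $\Pi_\rho$ on $I_k$ under $S$, not all parties outside $F(S)$ decide by round $R$. Define the event
\[
 E_{k,S}=\{\rho:\ \text{not all parties outside }F(S)\text{ decide by round }R\text{ in }\Pi_\rho\text{ on }(I_k,S)\}.
\]
The execution of $\Pi$ on $(I_k,S)$ with randomness $\rho$ is exactly the execution of $\Pi_\rho$ on $(I_k,S)$, because the schedule is offline and does not depend on $\rho$. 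So $E_{k,S}$ is precisely the event in \eqref{eq:heavy-schedule}, and it is measurable: it is determined by finitely many rounds of a process driven by countable-outcome local coins.

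\textbf{Step 3: averaging.} The events $E_{k,S}$ cover the good set of $\rho$, which has measure one. By the union bound,
\[
 1\le\sum_{k,S}\Prb[E_{k,S}].
\]
The number of pairs is $(n+1)|\Sigma_R|$, so some pair $D=(k,S)$ satisfies $p_D=\Prb[E_{k,S}]\ge 1/((n+1)|\Sigma_R|)$. This proves \eqref{eq:heavy-schedule}.

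\textbf{Step 4: the logarithmic bound.} Using \eqref{eq:schedule-count},
\[
 \log(1/p_D)\le \log(n+1)+R\log\bigl(1+n(n+1)\bigr).
\]
Since $1+n(n+1)\le(n+1)^2$, the right-hand side is at most $\log(n+1)+2R\log(n+1)\le 2(R+1)\log(n+1)$. This gives \eqref{eq:heavy-schedule-log} with $C_1=2$.
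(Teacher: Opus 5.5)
Your proposal is correct and follows the paper's proof: fix the randomness $\rho$, apply \cref{lem:crash-schedule} to the deterministic $\Pi^\rho$, take a union bound over the $(n+1)|\Sigma_R|$ pairs $(k,S)$, and take logarithms. Discarding the null set of $\rho$ on which termination might fail is a harmless and slightly more careful version of the paper's direct assertion $\Omega\subseteq\bigcup_{k,S}E_{k,S}$, and your explicit constant $C_1=2$ fills in the paper's ``taking logarithms.''
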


\begin{proof}
Let $\rho=(\rho_1,\ldots,\rho_n)$ be the vector of the parties' independent
private randomness. For fixed $\rho$, the protocol $\Pi^\rho$ is
deterministic. Under our pointwise definition of perfect safety, $\Pi^\rho$
satisfies agreement and unanimous validity for every $\rho$.

For each $k\in\{0,\ldots,n\}$ and $S\in\Sigma_R$, let
\[
 E_{k,S}
 :=
 \left\{
 \rho:
 \begin{array}{c}
 \text{not all parties outside $F(S)$ decide by round $R$}\\
 \text{in $\Pi^\rho$ on input $I_k$ under schedule $S$}
 \end{array}
 \right\}.
\]
Applying Lemma~\ref{lem:crash-schedule} to $\Pi^\rho$ shows that
\[
 \Omega
 \subseteq
 \bigcup_{k=0}^{n}\ \bigcup_{S\in\Sigma_R}E_{k,S}.
\]
Hence, some pair $D=(k,S)$ satisfies
\[
 p_D:=\Prb[E_{k,S}]
 \geq
 \frac{1}{(n+1)|\Sigma_R|}
 \geq
 \frac{1}{(n+1)\bigl(1+n(n+1)\bigr)^R}.
\]
Taking logarithms proves the final assertion.
\end{proof}

%Suppose every
%round-$R$ successor had all its live parties decide.  Along adjacent
%cutoff successors, choose a live party different from the crashed sender and
%the unique differing recipient.  Such a party exists because $n>R+1$.  Its
%state is identical in the two executions, and agreement therefore makes the
%two decisions equal.  The full-send-then-crash and no-crash successors are
%bridged in the same way.  All successors would decide the same value,
%contradicting bivalence.  Hence some successor is slow.

\section{Applying the forcing lemma}

In this section, we complete the proof of our main theorem by applying Lemma \ref{lem:forcing} on $\PiHat$ and an event $E_D$ derived from the event given by Lemma \ref{lem:heavy-schedule}, and then projecting back to the original protocol $\Pi$.
\begin{proof}[Proof of \cref{thm:main}]

Let $a_0,A$ be the constants from Lemma~\ref{lem:forcing}, and let $C_1$ be
the constant in \eqref{eq:heavy-schedule-log}.  Choose a universal constant $c>0$
small enough that
\begin{equation}\label{eq:choice-of-c}
 \frac{2C_1c}{9}\le a_0^2,
 \qquad
 4A\sqrt{2C_1c}\le\frac12,
 \qquad
 \frac{c}{3\log 2}\le\frac14.
\end{equation}
Set
\begin{equation}\label{eq:R-choice}
 R=\left\lfloor
 c\frac{t^2}{n\log(n+1)}
 \right\rfloor
\end{equation}
and first consider the regime $R\ge R_0$.  Since $t<n/3$, the third condition
in \eqref{eq:choice-of-c} gives
\begin{equation}\label{eq:R-budget-slack}
 R\le c\frac{t^2}{n\log(n+1)}
 <\frac{c}{3\log2}t
 \le\frac t4.
\end{equation}
In particular, $R\le t$ and $n>R+1$, so the preceding section, in Lemma \ref{lem:heavy-schedule}, supplies a pair
$D=(k,S)$ with $|F(S)|\le R$, and the following event.  

\[
G_D :=
\left\{
  \text{not all parties outside }F(S)\text{ decide in }\widehat{\Pi}
  \text{ by round }R
\right\}.
\]

Consider the event
\[
E_D :=
\left\{
  \text{no party outside }F(S)\text{ decides in }\widehat{\Pi}
  \text{ by round }R-d
\right\}.
\]

Note that by the properties of $\PiHat$ (Specifically, property (3) in Claim \ref{claim:wrapper-properties}), we have that $G_D\subseteq E_D$. And so $\Pr[E_D]\geq \Pr[G_D]$. We thus proceed with $\Pr[E_D]$, and in particular denote $p_D:=\Pr[E_D]$.
Put
\[
 L=\log(1/p_D),
 \qquad
 B=t-R.
\]
Because $R\ge1$, equations \eqref{eq:heavy-schedule-log} and \eqref{eq:R-choice}
give
\begin{equation}\label{eq:L-bound-for-forcing}
 L\le 2C_1R\log(n+1)
 \le 2C_1c\frac{t^2}{n}.
\end{equation}
The first condition in \eqref{eq:choice-of-c} and $t<n/3$ therefore imply
\[
 L\le\frac{2C_1c}{9}n\le a_0^2n.
\]
Moreover, the second condition in \eqref{eq:choice-of-c},
\eqref{eq:L-bound-for-forcing}, and \eqref{eq:R-budget-slack} imply
\[
 4A\sqrt{nL}
 \le4A\sqrt{2C_1c}\,t
 \le\frac t2
 \le B.
\]
Finally, $1\le B$ and, since $|F(S)|\le R$ and $t<n$,
\[
 B=t-R\le n-|F(S)|.
\]
Thus all hypotheses of Lemma~\ref{lem:forcing} hold for the finite labelled
transcript process through round $R$ of the execution of $\PiHat$ on input
$I_k$, with base crash schedule $S$, event $E_D$ viewed as a terminal
transcript event, and additional budget $B$.  The lemma supplies a
label-tampering attack $\mathcal T$, and
Lemma~\ref{lem:protocol-to-process} realizes $\mathcal T$ as a full-information,
strongly adaptive, rushing adversary $\widehat{\mathcal A}$ in the original
synchronous execution, such that
\begin{equation}\label{eq:forced-slow-event}
 \Prb_{\widehat{\mathcal A}}[E_D]\ge\frac34,
 \qquad
 |F(S)\cup C|\le |F(S)|+B\le t
\end{equation}
on every execution, where $C\subseteq[n]\setminus F(S)$ is its additional
corruption set.

On $E_D$, the fault bound in \eqref{eq:forced-slow-event} and $t<n$ ensure
that there exists
\[
 q\notin F(S)\cup C.
\]
This party is forever correct.  The event $E_D$ says that no
$\mathsf{DECIDE}$ notification from any party outside $F(S)$, including $q$,
is recorded by round $R-d$.
Therefore
\begin{equation}\label{eq:wrapper-tail}
 \Prb[T_{\PiHat,\widehat{\mathcal A},I_k}\ge R-d]\ge 3/4.
\end{equation}
If adversaries are formally required to be deterministic, fix the attack's
private randomness.  Averaging guarantees a seed attaining at least
the probability in \eqref{eq:wrapper-tail}, and the hard corruption bound
holds for every seed.

\paragraph{Projection to the original protocol.}
The wrapper messages in every outgoing message are a deterministic function
of the $\Pi$ transcript and are ignored by $\Pi$ itself.  Hence the
adversary $\widehat{\mathcal A}$ constructed against $\widehat{\Pi}$ induces
an adversary $\mathcal A$ against $\Pi$: It makes the same corruptions and
the same replacements to the genuine $\Pi$ components, while simulating the
wrapper internally.  The two coupled executions have the same genuine
$\Pi$ transcript and the same corrupted-party set.

By property~(2) given by Claim \ref{claim:wrapper-properties},
\[
 T_{\widehat{\Pi},\widehat{\mathcal A},I_k}
 \le
 T_{\Pi,\mathcal A,I_k}+d.
\]
On $E_D$, no forever correct party has finished $\widehat{\Pi}$ by round
$R-d$.  Therefore
\[
 E_D
 \subseteq
 \left\{
 T_{\Pi,\mathcal A,I_k}\ge R-2d
 \right\},
\]
and consequently
\[
 \Prb\!\left[
 T_{\Pi,\mathcal A,I_k}\ge R-2d
 \right]
 \ge\Prb[E_D]
 \ge\frac34.
\]
Since $R\ge R_0$, we have $R-2d\ge R/2$, which gives the desired
$\Omega(R)$ lower bound. The case of $R<R_0$ is degenerate, and its proof can be found in \cref{sec:small_R_case}.

This completes the proof of our main theorem, \cref{thm:main}.
\end{proof}

\section{AI usage disclosure}\label{sec:AI_usage}
The first draft of the statement and proof of the forcing lemma, Lemma \ref{lem:forcing}, was obtained in collaboration with ChatGPT 5.6 Sol Ultra. Initially, the author fed the model the works of Ben-Or \& Bar-Joseph \cite{BarJosephBenOr98}, Haitner \& Karidi-Heller \cite{HaitnerKaridiHeller26}, and Etesami, Mahloujifar, Mahmoody \cite{EtesamiMahloujifarMahmoody20}. The author described his wish to stray from the round-by-round valency approach of Ben-Or \& Bar-Joseph, described the Chor, Merritt, Shmoys \cite{ChorMerrittShmoys89} crash argument, and asked whether the multi-round forcing tools of Haitner \& Karidi-Heller could be adapted to our setting. After thinking for about 2 hours, the model responded with a candidate proof. Significant rounds of interaction for editing, simplification and typesetting followed. The author takes full and complete responsibility for the veracity of the contents of the paper.

\printbibliography

\appendix
\section{Main theorem Addendum}\label{sec:small_R_case}
It remains to consider $R<R_0$.  Put
\[
 Q:=\frac{t^2}{n\log(n+1)}.
\]
Since $R=\lfloor cQ\rfloor<R_0$, we have
\[
 Q<\frac{R_0+1}{c},
\]
which is bounded by a universal constant.  A one-round lower bound therefore
suffices for the expectation.  Consider a failure-free execution with two
parties $i\ne j$ having inputs $x_i=0$ and $x_j=1$.  Before receiving any
message, their behavior depends only on their respective inputs and
independent local randomness.  Validity implies that any decision made by
$i$ at this point is $0$, while any such decision made by $j$ is $1$.
Perfect agreement and independence therefore imply that at least one of
these parties does not decide and halt before communication, with probability
one.  Hence
\[
 \E T_{\Pi,\mathcal A,x}\ge 1=\Omega(Q).
\]

Finally, take $c_3=3/4$ and $c_1=c/4$.  In the regime $R\ge R_0$, the preceding
argument gives
\[
 T_{\Pi,\mathcal A,x}\ge c_1Q
\]
with probability at least $c_3$.  In the regime $R<R_0$, choose
\[
 c_2\ge\frac{c_1(R_0+1)}{c}.
\]
Then $c_1Q-c_2\le0$, so the claimed tail inequality is automatic.
\end{document}